\newtheorem{thm}{Theorem}
\newtheorem{lem}{Lemma}
\newtheorem{prp}{Proposition}
\newtheorem{dfn}{Definition}
\newtheorem{cor}{Corollary}
\newtheorem{ex}{Example}
\newtheorem{remark}{Remark}
\def\E{\mathscr{E}}
\def\T{\mathcal{T}}
\def\Y{\mathcal{Y}}
\def\cS{\mathcal{S}}
\def\H{\mathcal{H}}
\def\Re{\mathbb{R}}
\def\Ce{\mathbb{C}}
\def\U{\mathcal{U}}
\def\V{\mathcal{V}}
\def\Z{\mathcal{Z}}
\def\W{\mathcal{W}}
\def\0{\boldsymbol{0}}
\def\1{\boldsymbol{1}}
\DeclareMathOperator*{\im}{im}
\DeclareMathOperator*{\codim}{codim}
\DeclareMathOperator*{\Span}{Span}
\DeclareMathOperator*{\argmin}{argmin}
\DeclareMathOperator*{\Gr}{Gr}
\DeclareMathOperator*{\rank}{rank}
\DeclarePairedDelimiter\floor{\lfloor}{\rfloor}
\newcommand{\myparagraph}[1]{\smallskip\noindent\textbf{#1.}}
\begin{document} 

\graphicspath{{figures/}}

\title{Homomorphic Sensing}

\author{
    \IEEEauthorblockN{Manolis C. Tsakiris \, \, \,  \, \, \,  Liangzu Peng\\ 
    \vspace{0.2in}   
    \vspace{0.05in}         
    \IEEEcompsocitemizethanks{\IEEEcompsocthanksitem The authors are with the School of Information Science and Technology, ShanghaiTech University, Shanghai, China.
    \IEEEcompsocthanksitem Correspondence to M.C. Tsakiris, mtsakiris@shanghaitech.edu.cn}
} \normalsize}

\IEEEtitleabstractindextext{
\begin{abstract}
A recent line of research termed \emph{unlabeled sensing} and \emph{shuffled linear regression} has been exploring under great generality the recovery of signals from subsampled and permuted measurements; a challenging problem in diverse fields of data science and machine learning. In this paper we introduce an abstraction of this problem which we call \emph{homomorphic sensing}. Given a linear subspace and a finite set of linear transformations we develop an algebraic theory which establishes conditions guaranteeing that points in the subspace are uniquely determined from their homomorphic image under some transformation in the set. As a special case, we recover known conditions for unlabeled sensing, as well as new results and extensions. On the algorithmic level we exhibit two dynamic programming based algorithms, which to the best of our knowledge are the first working solutions for the unlabeled sensing problem for small dimensions. One of them, additionally based on branch-and-bound, when applied to image registration under affine transformations, performs on par with or outperforms state-of-the-art methods on benchmark datasets.
\end{abstract}
\begin{IEEEkeywords}
Homomorphic Sensing, Unlabeled Sensing, Shuffled Linear Regression, Abstract Linear Algebra, Algebraic Geometry, Branch and Bound, Dynamic Programming, Linear Assignment Problem, Image Registration, Affine Transformation
\end{IEEEkeywords}}

\maketitle

	\section{Introduction} \label{section:Introduction}
	In a recent line of research termed \emph{unlabeled sensing}, it has been established that uniquely recovering a signal from shuffled and subsampled measurements is possible as long as the number of measurements is at least twice the intrinsic dimension of the signal \cite{Unnikrishnan:TIT18}. The special case where the signal is fully observed but subject to a permutation of its values is known as \emph{shuffled linear regression} \cite{Hsu:NIPS17,Pananjady:TIT18,Choi:ISIT18,Abid:arXiv18}. In its simplest form, it consists of solving a linear system of equations, with the entries of the right hand side vector permuted \cite{Unnikrishnan:Alerton15,Tsakiris:SLR-arXiv18}.
	
	The unlabeled sensing or shuffled linear regression problems and their variations naturally arise in many applications in data science and engineering, such as 1) record linkage for data integration \cite{Lahiri:JASA2005,Shi:arXiv18}, a particularly important problem in medical data analysis where publicly available health records are anonymized, 2) image registration \cite{Lian:PAMI17}, multi-target tracking \cite{POORE20061074} and pose/correspondence estimation \cite{David2004,5459318}, 3) header-free communications in Internet-Of-Things networks \cite{Panajady:ISIT17,Choi:ISIT18,Peng:ICASSP19} and user de-anonymization \cite{4531148,5062142}, 4) acoustic wave field reconstruction \cite{Dokmanic:SPL-19}, 5) system identification under asynchronous input/output samples \cite{1057717}, and many more, e.g., see \cite{Unnikrishnan:TIT18,Pananjady:TIT18}.
	
	\subsection{Prior-Art} \label{subsection:PriorArt}
	\myparagraph{Theory} Suppose that $y = \Pi^* A x^* + \varepsilon \in \Re^m$ is a noisy and shuffled version of some signal $Ax^*$, where $x^* \in \Re^n$ is some unknown regression vector, $\Pi^*$ is some unknown permutation, and $\varepsilon$ is noise. What can be said about the estimation of $x^*$ and $\Pi^*$ given $y$, $A$ and the distribution of $\varepsilon$? This shuffled linear regression problem has been a classic subject of research in the area of record linkage, where predominant methods study maximum likelihood estimators under the working hypothesis that an accurate estimate for the probabilities of transpositions between samples is available \cite{Fellegi:JASA1969,Lahiri:JASA2005}. However, this is a strong hypothesis that does not extend to many applications beyond record linkage.
	
	Very recently important theoretical advances have been made towards understanding this problem in greater generality. Specifically, \cite{Slawski:EJS19,Abid:arXiv17} and \cite{Hsu:NIPS17} have demonstrated that in the absence of any further assumptions, the maximum likelihood estimator $\widehat{x}_{\text{ML}}$ given by
	\begin{align}
	(\widehat{\Pi}_{\text{ML}}, \widehat{x}_{\text{ML}}) = {\argmin_{\Pi, x}} \left\| y - \Pi A x \right\|_2 \label{eq:MLE},
	\end{align} 
	\noindent where $\Pi$ ranges over all permutations, is biased. On the other hand, if the SNR is large enough,  \cite{Pananjady:Allerton16,Pananjady:TIT18} have asserted that $\widehat{\Pi}_{\text{ML}}=\Pi^*$ with high probability. If $\Pi^*$ is sparse enough, i.e., only a small percentage of entries of $Ax^*$ have been shuffled (this is the support of $\Pi^*$), \cite{Slawski:EJS19} have shown that under weaker SNR conditions the supports of $\widehat{\Pi}_{\text{ML}}, \Pi^*$ coincide. Moreover, they provide well behaved error bounds for $\|\widehat{x}_{\text{ML}} - x^*\|_2$ as well as for $\|\widehat{x}_{\text{RR}} - x^*\|_2$, where $\widehat{x}_{\text{RR}}$ is the solution to the convex $\ell_1$ robust regression problem
	\begin{align}
	\min_{x,e} \, \, \|y - Ax -\sqrt{m} e\|_2^2 + m \lambda \|e\|_1, \label{eq:RR} \, \, \, \lambda>0,
	\end{align}  in which the support of the sparse error $e$ is meant to capture the support of the sparse permutation $\Pi^*$.
	
	Another interesting line of work related more to algebraic geometry rather than statistics, is that of \cite{Choi:ISIT18}, which for the noiseless case ($\varepsilon=0$) has proposed the use of symmetric polynomials towards extracting permutation-invariant constraints that $x^* \in \Re^n$ must satisfy. Such a \emph{self-moment} estimator had already been briefly investigated by \cite{Abid:arXiv17} from a statistical point of view, where the authors noted that in the presence of noise it is unclear whether the resulting system of equations has any solutions. Perhaps surprisingly, working with $n$ non-homogeneous polynomials of degrees $1,2,\dots,n$ in $n$ variables, the work of \cite{Tsakiris:SLR-arXiv18} has established that regardless of the value of the noise $\varepsilon$ and under the sole requirement that $A$ is generic\footnote{A rigorous definition of \emph{generic} will be given in \S \ref{subsection:Preliminaries}}, the polynomial system always has a solution and in fact at most $n!$ of them, thus proving the existence of a purely algebraic estimator for $x^*$.
	
	Much less is known for the more challenging and realistic case of unlabeled sensing, where now $y \in \Re^k$ consists of a shuffled noisy subset of the entries of $A x^* \in \Re^m$, i.e., there is no longer a 1-1 correspondence between $y$ and $Ax^*$. The main theoretical finding up to date comes from the seminal work of \cite{Unnikrishnan:TIT18}, according to which, in the absence of noise, $x^*$ is uniquely recoverable from $y$ and $A$ as long as 1) $k \ge 2n$ and 2) $A$ is generic. Inspired by a certain duality between compressed and unlabeled sensing, a recovery condition for noisy data has further been given by \cite{Haghighatshoar:TSP18} in terms of a restricted isometry property. However, this approach is valid only for the special case of $y$ obtained by subsampling $Ax^*$ while maintaining the relative order of the samples.
	
	\myparagraph{Algorithms} Towards computing a solution to the shuffled linear regression problem, which can be solved by brute force in $\mathcal{O}(m!)$, the algorithms presented by \cite{Hsu:NIPS17,Panajady:ISIT17} are conceptually important but applicable only for noiseless data or they have a complexity of at least $\mathcal{O}(m^7)$. When the ratio of shuffled data is small, one may apply the $\ell_1$ robust regression method of \eqref{eq:RR} \cite{Slawski:EJS19}. Other approaches use alternating minimization or multi-start gradient descent to solve \eqref{eq:MLE} \cite{Abid:arXiv17,Abid:arXiv18}, an NP-hard problem for $n>1$ \cite{Pananjady:TIT18}. Due to the high non-convexity such methods are very sensitive to initialization. This is remedied by the algebraically-initialized expectation-maximization method of \cite{Tsakiris:SLR-arXiv18}, which uses the solution to the polynomial system of equations mentioned above to obtain a high-quality initialization. This approach is robust to small levels of noise, efficient for $n \le 5$, and is able to handle fully shuffled data; its main drawback is its exponential complexity in $n$.
	
	In the unlabeled sensing case, which may be thought of as shuffled linear regression with outliers, the above methods in principle break down. Instead, we are aware of only two relevant algorithms, which nevertheless are suitable under strong structural assumptions on the data. The $\mathcal{O}(nm^{n+1})$ method of \cite{Elhami:ICASSP17} applies a brute-force solution, which explicitly relies on the data being noiseless and whose theoretical guarantees require a particular \emph{exponentially spaced} structure on $A$. On the other hand, \cite{Haghighatshoar:TSP18} attempt to solve
	\begin{align}
	{\min_{S, x}} \left\| y - S A x \right\|_2 \label{eq:US},
	\end{align} via alternating minimization, with $S$ in \eqref{eq:US} being a selection matrix\footnote{An \emph{order-preserving selection matrix} is a row-submatrix of the identity matrix. A \emph{selection matrix} is a row-permutation of an order-preserving selection matrix. This is equivalent to a permutation matrix composed by a coordinate projection.}. Their main algorithmic insight is to solve for $S$ given $x$ via dynamic programming. However, their algorithm works only for order-preserving selection matrices $S$, a rather strong limitation, and seems to fail otherwise. It is thus fair to conclude that, to the best of our knowledge, there does not seem to exist a satisfactory algorithm for unlabeled sensing, even for small $n$.

	\subsection{Contributions}
	
	\myparagraph{Theory} In this work we adopt an abstract view of the shuffled linear regression and unlabeled sensing problems, which naturally leads us to a more general formulation that we refer to as \emph{homomorphic sensing}. In homomorphic sensing one is given a finite set $\T$ of linear transformations $\Re^m \rightarrow \Re^m$ (to be called \emph{endomorphisms}) and a linear subspace $\V \subset \Re^m$ of dimension $n$, and asks under what conditions the image $\tau(v)$ of some unknown $v \in \V$ under some unknown $\tau \in \T$ is enough to uniquely determine what $v$ is. This is equivalent to asking under what conditions the relation $\tau_1(v_1)=\tau_2(v_2)$ implies $v_1=v_2$ whenever $\tau_1,\tau_2 \in \T$ and $v_1,v_2 \in \V$. E.g., in shuffled linear regression these endomorphisms are permutations, while in unlabeled sensing they are compositions of permutations with coordinate projections, and the unlabeled sensing theorem of \cite{Unnikrishnan:TIT18} asserts that a sufficient condition for unique recovery is that 1) the coordinate projections preserve at least $2n$ coordinates and 2) $\V$ is generic.
	
	The first theoretical contribution of this paper is a general homomorphic sensing result (Theorem \ref{thm:Endomorphisms}) applicable to arbitrary endomorphisms, and thus of potential interest in a broad spectrum of applications. For generic $\V$, the key condition asks that the dimension $n$ of $\V$ does not exceed the codimension of a certain algebraic variety associated with pairs of endomorphisms from $\T$. This in turn can be used to obtain within a principled framework the unlabeled sensing result of \cite{Unnikrishnan:TIT18}. Our second theoretical contribution is a recovery result (Theorem \ref{thm:Generic}) for generic points in $\V$ (as opposed to all points), which for the unlabeled sensing case says that the coordinate projections need to preserve at least $n+1$ coordinates (as opposed to $2n$).

	\myparagraph{Algorithms} Inspired by \cite{Haghighatshoar:TSP18,Elhami:ICASSP17} and \cite{Yang-PAMI16} we make three algorithmic contributions. First, we introduce a branch-and-bound algorithm for the unlabeled sensing problem by globally minimizing \eqref{eq:US}. Instead of branching over the space of selection matrices, which is known to be intractable \cite{Li:ICCV07}, our algorithm only branches over the space of $x \in \Re^n$, relying on a locally optimal computation of the selection matrix via dynamic programming. Second, it is this dynamic programming feature that also allows us to modify the purely theoretical algorithm of \cite{Elhami:ICASSP17} into a robust and efficient method for small dimensions $n$. These two algorithms constitute to the best of our knowledge the first working solutions for the unlabeled sensing problem. Third, when customized for image registration under an affine transformation, our branch-and-bound algorithm is on par with or outperforms state-of-the-art methods \cite{Lian:PAMI17,Jian:PAMI2011,Myronenko:PAMI2010} on benchmark datasets.

\section{Homomorphic Sensing: Algebraic Theory}
	
The main results of this section are Theorems \ref{thm:Endomorphisms}, \ref{thm:Generic}. To avoid obscuring the main ideas by algebraic arguments exceeding the scope of a computer science paper, we only sketch some proofs and refer the reader to \cite{Tsakiris:ECHS-arXiv18} for the details. All results in this section refer to the noiseless case; an analysis for corrupted data is left to future research. 
	
	\subsection{Preliminaries} \label{subsection:Preliminaries}
	
	For an integer $k$, $[k]=\{1,\dots,k\}$. For non-negative real number $\alpha$, $\floor*{\alpha}$ is the greatest integer $k$ such that $k \le \alpha$.
	
	\subsubsection{$\Re$ versus $\Ce$}
	We work over the complex numbers $\Ce$. This does not contradict the fact that in this paper we are primarily interested in $\Re^m$, rather it facilitates the analysis. E.g., a matrix $T \in \Re^{m \times m}$ may be diagonalizable over $\Ce$ but not over $\Re$. This is the case with permutations, whose eigenvalues are associated with the complex roots of unity. Hence our philosophy is ``check the conditions over $\Ce$, then draw a conclusion over $\Re$"; see Remark \ref{rem:Real2Complex}.
	
	\subsubsection{Abstract Linear Algebra}
	We adopt the terminology of abstract linear algebra \cite{Roman}, since the ideas we discuss in this paper are best delivered in a coordinate-free way. The reader who insists on thinking in terms of matrices may safely replace linear transformations, kernels and images by matrices, nullspaces and rangespaces, respectively.
	
	We work in $\Ce^m$. For a subspace $\V$ we denote by $\dim(\V)$ its dimension. For subspaces $\V,\W$ we say that ``$\V,\W$ do not intersect" if $\V \cap \W = 0$. An endomorphism is a linear transformation $\tau: \Ce^m \rightarrow \Ce^m$; an automorphism is an invertible endomorphism. We denote by $i$ the identity map $i(w)=w, \, \forall w \in \Ce^m$. If $\tau$ is an endomorphism, its kernel $\ker(\tau)$ is the set of all $v \in \Ce^m$ such that $\tau(v)=0$, and its image $\im(\tau)$ is the set of all $\tau(w)$ for $w \in \Ce^m$. By $\rank(\tau)$ we mean $\dim(\im(\tau))$. The preimage of $\tau(v)$ is the set of all $w \in \Ce^m$ such that $\tau(v)=\tau(w)$, i.e., the set of all $v+\xi$, for all $\xi \in \ker(\tau)$. If $\V$ is a linear subspace, by $\tau(\V)$ we mean the set of all vectors $\tau(v)$ for all $v \in \V$. We denote by $\E_{\tau,\lambda}$ the eigenspace of $\tau$ associated to eigenvalue $\lambda$, i.e., the set of all $v \in \Ce^m$ such that $\tau(v) = \lambda v$. For $\tau_1,\tau_2$ endomorphisms the generalized eigenspace of the pair $(\tau_1,\tau_2)$ of eigenvalue $\lambda$ is the set of all $w \in \Ce^m$ for which $\tau_1(w) = \lambda \tau_2(w)$. By a projection $\rho$ of $\Ce^m$ we mean an idempotent ($\rho^2=\rho$) endomorphism. By a coordinate projection we mean a projection that sets to zero certain coordinates of $\Ce^m$ while preserving the rest.
	
	\subsubsection{Algebraic Geometry} \label{subsubsection:AG}
	By an algebraic variety (or variety) of $\Ce^m$ we mean the zero locus of a set of polynomials in $m$ variables. The study of such varieties is facilitated by the use of the Zariski topology, in which every variety is a closed set. In particular, there is a well-developed theory in which topological and algebraic notions of dimension coincide \cite{Hartshorne-1977}. This allows us to assign dimensions to sets such as the intersection of a variety with the complement of another, called quasi-variety. A linear subspace $\cS$ is an algebraic variety and its linear-algebra dimension coincides with its algebraic-geometric dimension.
	The union $\mathcal{A}=\cup_{i \in [\ell]} \cS_i$ of $\ell$ linear subspaces is also an algebraic variety and $\dim(\mathcal{A})=\max_{i \in [\ell]} \dim (\cS_i)$. For a variety $\Y$ which is defined by homogeneous polynomials $\dim(\Y)$ can be characterized as the smallest number of hyperplanes through the origin $0$ that one needs to intersect $\Y$ with to obtain the origin. For $\Y$ a variety of $\Ce^m$, we set $\codim(\Y)=m-\dim(\Y)$. The set of all $n$-dimensional linear subspaces of $\Ce^m$ is itself an algebraic variety of $\Ce^{{m \choose n}}$ called Grassmannian and denoted by
	$\Gr(n,m)$. By a generic subspace $\V$ of dimension $n$ we mean a non-empty open subset $\mathbb{U} \subset \Gr(n,m)$ in the Zariski topology of $\Gr(n,m)$. Such a $\mathbb{U}$ is dense in $\Gr(n,m)$ and if one endows $\Gr(n,m)$ with a continuous probability measure, then $\mathbb{U}$ has measure $1$ \cite{Burgisser:Springer2013}. Hence the reader may safely think of a \emph{generic subspace} as a \emph{random subspace}. When we say ``for a generic $\V$ property $\mathscr{P}$ is true", we mean that the set of all $\V$ for which property $\mathscr{P}$ is true contains a non-empty Zariski open subset of $\Gr(n,m)$.  Hence for a randomly drawn $\V$ property $\mathscr{P}$ will be true with probability $1$. We will make repeated use of the following fact:
	\begin{lem} \label{lem:VarietyIntersectionGeneric}
		Let $\Y$ be a variety defined by homogeneous polynomials and $\V$ a generic linear subspace. Then
		\begin{align}
		\dim (\Y \cap \V) = \max \{\dim(\V) - \codim(\Y),0 \}.
		\end{align}
	\end{lem}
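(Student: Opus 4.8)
The plan is to separate the claimed equality into a lower bound that holds for \emph{every} subspace $\V$ and an upper bound that genuinely requires genericity; only the latter uses that $\V$ is generic. Throughout write $d = \dim(\Y)$ and $c = \codim(\Y) = m - d$, so that the target value is $\max\{\dim(\V) - c, 0\}$.

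For the lower bound I would invoke the affine dimension theorem. If $\Y_0$ is a top-dimensional irreducible component of $\Y$, then since $\Y_0$ and $\V$ are subvarieties of $\Ce^m$, every irreducible component of $\Y_0 \cap \V$ has dimension at least $\dim(\Y_0) + \dim(\V) - m = \dim(\V) - c$. Because $\Y_0$ and $\V$ are cones, $0 \in \Y_0 \cap \V \subseteq \Y \cap \V$, so the intersection is nonempty and $\dim(\Y \cap \V) \ge 0$. If $\dim(\V) - c \ge 1$ and $\Y_0 \cap \V$ were $\{0\}$, the dimension theorem applied to its single component $\{0\}$ would force $0 \ge \dim(\V) - c \ge 1$, a contradiction; hence $\Y \cap \V$ has a component of dimension $\ge \dim(\V) - c$. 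In all cases $\dim(\Y \cap \V) \ge \max\{\dim(\V) - c, 0\}$, and this holds for every $\V$, with no genericity assumption.

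For the upper bound I would exploit the hyperplane characterization of dimension recalled just above the lemma and realize a generic $\V$ as the intersection of $m - n$ generic hyperplanes through the origin, cutting $\Y$ down one hyperplane at a time. The core step is: if $\mathcal{Z}$ is a homogeneous variety of dimension $r \ge 1$ and $H$ is a hyperplane through $0$ containing no irreducible component of $\mathcal{Z}$, then $\dim(\mathcal{Z} \cap H) = r - 1$, while if $r = 0$ then $\mathcal{Z} \cap H = \{0\}$. The inequality $\dim(\mathcal{Z} \cap H) \le r - 1$ holds because for each component $\mathcal{Z}_j \not\subseteq H$ the intersection $\mathcal{Z}_j \cap H$ is a proper closed subvariety of the irreducible $\mathcal{Z}_j$, so drops dimension; taking the maximum over components gives the bound. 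Iterating this $m - n$ times produces a subspace $\V_0$ with $\dim(\Y \cap \V_0) \le \max\{d - (m-n), 0\} = \max\{\dim(\V) - c, 0\}$, establishing the upper bound for at least one $\V_0$.

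Finally I would upgrade ``for one $\V_0$'' to ``for a Zariski-generic $\V$'' using upper semicontinuity of fibre dimension. Form the incidence variety $\mathcal{I} = \{ (\V, y) \in \Gr(n,m) \times \Ce^m : y \in \Y,\ y \in \V \}$ with first projection $\pi : \mathcal{I} \to \Gr(n,m)$, whose fibre over $\V$ is $\Y \cap \V$. By Chevalley's theorem the locus $\{\V : \dim \pi^{-1}(\V) \ge k\}$ is closed, so the set where $\dim(\Y \cap \V)$ attains its minimum is a nonempty Zariski-open subset of $\Gr(n,m)$; the exhibited $\V_0$ shows this minimum is $\le \max\{\dim(\V)-c,0\}$, while the lower bound shows it is $\ge \max\{\dim(\V)-c,0\}$, so equality holds on a nonempty open set, i.e. for generic $\V$. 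The main obstacle is the upper-bound step: one must ensure that a single generic hyperplane avoids containing any of the finitely many components arising at each stage of the induction (the hyperplanes containing a fixed positive-dimensional cone form a proper closed subset of the dual projective space, and a finite union of such loci is still proper closed), and then verify that this genericity, naturally phrased on a product of hyperplane spaces, cuts out a nonempty Zariski-open subset of the Grassmannian. The semicontinuity argument is precisely what reconciles these two descriptions of ``generic''.
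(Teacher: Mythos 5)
The paper itself offers no proof of this lemma---it is invoked as a standard algebraic-geometry fact, with details deferred to the long version---so your argument stands or falls on its own. Two of your three steps are sound: the lower bound via the affine dimension theorem (valid for \emph{every} $\V$, using that components of cones are cones, so $\Y_0 \cap \V \ni 0$ is nonempty), and the construction of a single witness $\V_0$ by cutting with $m-n$ hyperplanes, each chosen outside the finitely many proper closed loci of hyperplanes containing a component of the current (conical, hence positive-dimensional-component) intersection. Both are correct as written.

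The gap is in the third step. Chevalley's theorem on upper semicontinuity of fibre dimension holds on the \emph{source}: for a finite-type morphism $f \colon X \to Y$, the function $x \mapsto \dim_x f^{-1}(f(x))$ is upper semicontinuous on $X$. The version you need---that $\{\V : \dim \pi^{-1}(\V) \ge k\}$ is closed in the \emph{target}---requires $\pi$ to be proper (or at least closed), and your $\pi \colon \mathcal{I} \to \Gr(n,m)$ is not proper, since $\mathcal{I}$ is closed in $\Gr(n,m) \times \Ce^m$ and $\Ce^m$ is not complete. For non-proper morphisms target fibre-dimension loci can genuinely fail to be closed (already the locus $\{\dim \ge 0\}$, i.e.\ the image, need only be constructible), so the citation as stated does not support the claim, and mere constructibility is not enough for your argument: a nonempty constructible minimum locus need not contain a nonempty open set. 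The repair uses precisely the conical structure you already have in hand, in either of two ways. (a) Projectivize: since $\Y$ and every $\V$ are cones, the incidence set $\mathcal{I}' = \{(\V,[y]) \in \Gr(n,m) \times \mathbb{P}^{m-1} : y \in \Y \cap \V\}$ is closed, the projection to $\Gr(n,m)$ is proper because $\mathbb{P}^{m-1}$ is complete, and $\dim(\Y \cap \V) = \dim \mathbb{P}(\Y \cap \V) + 1$ unless $\Y \cap \V = \{0\}$ (empty projectivization); properness then gives closedness of the jump loci. (b) Use the zero section: every irreducible component of the cone $\Y \cap \V$ passes through the origin, so $\dim \pi^{-1}(\V) = \dim_{(\V,0)} \pi^{-1}(\V)$; pulling back the source-semicontinuity locus along the continuous section $\V \mapsto (\V,0)$ again yields that $\{\V : \dim(\Y \cap \V) \ge k\}$ is closed. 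With either fix, your minimum locus is open, the witness $\V_0$ pins the minimum at $\max\{\dim(\V) - \codim(\Y), 0\}$, and the lemma follows; without one of them, the final step is unsupported.
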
 A property of the Zariski topology that we need is that the intersection of finitely many non-empty open sets of an irreducible space (such as $\Gr(n,m)$) is open and non-empty.
	
	\subsection{Formulation and first insights} \label{subsection:FirstInsights}
	
	Let $\V \subset \Ce^m$ be a linear subspace of dimension $n$ and let $\T$ be a finite set of endomorphisms of $\Ce^m$. Let $v$ be some point (vector) in $\V$. Suppose that we know the image $\tau(v) \in \Ce^m$ of $v$ for some unspecified $\tau \in \T$. Can we uniquely recover $v$ from knowledge of $\V, \T$ and $\tau(v)$?
	
	\begin{ex}
		In shuffled linear regression $\T$ consists of the $m!$ permutations on the $m$ coordinates of $\Ce^m$. In unlabeled sensing $\T$ consists of the set of all possible combinations of permutations composed with coordinate projections. In both cases $\V$ is the range-space of some matrix $A \in \Ce^{m \times n}$ and $v = A x$ for some $x \in \Ce^n$. The meaning of $A$ and $x$ may vary depending on the application. E.g., in signal processing/control systems $x$ may be the impulse response of some linear filter, while in image registration $x$ may represent the parameters of some affine transformation.
	\end{ex} \noindent The above question motivates the following definition.
	
	\begin{dfn} \label{dfn:UniqueRecovery}
		Let $\V \subset \Ce^m$ be a subspace and $\T$ a finite set of endomorphisms of $\Ce^m$. We say that ``$v_1 \in \V$ is uniquely recoverable in $\V$ under $\T$" if whenever $\tau_1(v_1)=\tau_2(v_2)$ for some $\tau_1,\tau_2 \in \T$ and $v_2 \in \V$, then $v_1=v_2$. If this holds for every $v \in \V$, we have ``unique recovery in $\V$ under $\T$".
	\end{dfn}
	
	\begin{remark} \label{rem:Real2Complex}
		Let $\T$ be a set of endomorphisms of $\Re^m$. These can also be viewed as endomorphisms of $\Ce^m$ (Theorem 2.29 of \cite{Roman}). Let $\V$ be a subspace of $\Re^m$ with basis $v_1,\dots,v_n$ and $\V_{\Ce}=\Span_{\Ce}(v_1,\dots,v_n)$ the subspace of $\Ce^m$ generated by the basis of $\V$. Then unique recovery in $\V_{\Ce}$ under $\T$ implies unique recovery in $\V$ under $\T$.
	\end{remark}
	
	To build some intuition about the notion of unique recovery in $\V$ under $\T$, consider first the case where $\T=\{i,\tau\}$ with  $i$ the identity map and $\tau$ some automorphism. As a first step, we characterize the purely combinatorial condition of Definition \ref{dfn:UniqueRecovery} given in terms of points by the geometric condition of the next proposition given in terms of subspaces.
	
	\begin{lem} \label{lem:SingleTauCharacterization}
		Let $\tau$ be any automorphism of $\Ce^m$ and let $\V$ be any linear subspace. Then we have unique recovery in $\V$ under $\{i,\tau\}$ if and only if $\V \cap \tau(\V) \subset \E_{\tau,1}$.
	\end{lem}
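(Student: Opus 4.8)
The plan is to reduce the combinatorial condition of Definition~\ref{dfn:UniqueRecovery} to a small finite case analysis, since $\T=\{i,\tau\}$ has only two elements and hence only four ordered pairs $(\tau_1,\tau_2)$ need to be examined. First I would dispose of the two diagonal pairs: for $(i,i)$ the relation $v_1=v_2$ is already the conclusion, and for $(\tau,\tau)$ the relation $\tau(v_1)=\tau(v_2)$ forces $v_1=v_2$ because $\tau$ is an automorphism, hence injective. This isolates the entire content of the statement in the two off-diagonal pairs $(i,\tau)$ and $(\tau,i)$, which are interchangeable by renaming $v_1,v_2$, so it suffices to treat $(i,\tau)$.

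For the remaining pair the relation reads $v_1=\tau(v_2)$ with $v_1,v_2\in\V$. I would observe that such a relation is possible precisely when $v_1$ lies in $\V\cap\tau(\V)$: indeed $v_1\in\V$ by hypothesis and $v_1=\tau(v_2)\in\tau(\V)$, while conversely every element of $\V\cap\tau(\V)$ arises this way. This is the bridge that converts the pointwise condition of unique recovery into the subspace inclusion of the statement.

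To prove the forward implication I would assume unique recovery and take any $w\in\V\cap\tau(\V)$, writing $w=\tau(u)$ with $u\in\V$. Applying the definition with $v_1=w$, $v_2=u$, $\tau_1=i$, $\tau_2=\tau$ yields $w=u$, whence $\tau(w)=\tau(u)=w$, i.e.\ $w\in\E_{\tau,1}$; thus $\V\cap\tau(\V)\subset\E_{\tau,1}$. For the converse I would assume $\V\cap\tau(\V)\subset\E_{\tau,1}$ and consider any witnessing relation $v_1=\tau(v_2)$ with $v_1,v_2\in\V$. Then $v_1\in\V\cap\tau(\V)\subset\E_{\tau,1}$, so $\tau(v_1)=v_1=\tau(v_2)$, and injectivity of $\tau$ gives $v_1=v_2$; the symmetric pair $(\tau,i)$ is handled identically.

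The argument is essentially mechanical once the case reduction is in place, so I do not anticipate a genuine obstacle; the single point requiring care is the systematic use of the invertibility of $\tau$, which is what both eliminates the diagonal pair $(\tau,\tau)$ and supplies the final cancellation $\tau(v_1)=\tau(v_2)\Rightarrow v_1=v_2$. It is also worth noting that the proof uses no genericity of $\V$, consistent with the claim holding for an arbitrary linear subspace.
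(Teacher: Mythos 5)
Your proof is correct and takes essentially the same route as the paper: both directions hinge on identifying relations $v_1=\tau(v_2)$ with elements of $\V\cap\tau(\V)$ and on the invertibility of $\tau$ for the final cancellation. The only cosmetic differences are that you prove necessity directly (the paper argues by contraposition, exhibiting a violating pair via $v_2=\tau^{-1}(v_1)\neq v_1$) and that you spell out the trivial pairs $(i,i)$, $(\tau,\tau)$, $(\tau,i)$ from Definition~\ref{dfn:UniqueRecovery}, which the paper leaves implicit.
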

	\begin{proof}
		\emph{(Necessity)} Suppose that $\V \cap \tau(\V) \not\subset \E_{\tau,1}$. Then there exists some $v_1 \in \V \cap \tau(\V)$ not inside $\E_{\tau,1}$. Note that this also implies that $v_1 \not\in \E_{\tau^{-1},1}$. Since $v_1 \in \tau(\V)$ there exists some $v_2 \in \V$ such that $v_1 = \tau(v_2)$. Since $v_1 \not\in \E_{\tau^{-1},1}$ we must have that $v_1 \neq \tau^{-1}(v_1)=v_2$. Hence $v_1 = \tau(v_2)$ with $v_1 \neq v_2$. \emph{(Sufficiency)} Suppose that $v_1 = \tau(v_2)$ for some $v_1,v_2 \in \V$, whence $v_1 \in \V \cap \tau(\V)$. By hypothesis $v_1 \in \E_{\tau,1}$. Hence $\tau(v_1) = v_1$. Hence $\tau(v_1) = \tau(v_2)$. Since $\tau$ is invertible, $v_1=v_2$.
	\end{proof}
	Notice that condition $\V \cap \tau(\V) \subset \E_{\tau,1}$ of Lemma \ref{lem:SingleTauCharacterization} prevents $\V \cap \tau(\V)$ from intersecting $\E_{\tau,\lambda}$ for any $\lambda \neq 1$. Hence, a necessary condition for unique recovery is $\V \cap \tau(\V)$ to not intersect $\E_{\tau,\lambda \neq 1}$. Notice that $\V \cap \tau(\V) \cap \E_{\tau,\lambda} =0$ if and only if $\V \cap \E_{\tau,\lambda}=0$. This places a restriction on the dimension of $\V$, for if $\dim(\V)> \codim(\E_{\tau,\lambda})$ then $\V, \E_{\tau,\lambda}$ will necessarily intersect. Hence, a necessary condition for unique recovery in $\V$ under $\{i,\tau\}$ is
	\begin{align}
	\dim(\V) \le \min_{\lambda \neq 1} \,  \codim  \E_{\tau,\lambda}  \label{eq:SingleTauInvertible-simplified}.
	\end{align} Since the algebraic-geometric dimension of a finite union of linear subspaces is the maximum among the subspace dimensions (\S \ref{subsubsection:AG}), condition \eqref{eq:SingleTauInvertible-simplified} is equivalent to 
	\begin{align}
	\dim(\V) \le \codim\big(\cup_{\lambda \neq 1} \E_{\tau,\lambda} \big) \label{eq:SingleTauInvertible}.
	\end{align} Then for a generic $\V$ satisfying condition \eqref{eq:SingleTauInvertible}, Lemma \ref{lem:VarietyIntersectionGeneric} guarantees that $\V \cap \big(\cup_{\lambda \neq 1} \E_{\tau,\lambda} \big)=0$.
	
	\subsection{Recovery under diagonalizable automorphisms} \label{subsection:DiagonalizableAutomorphisms}
	
	It is not hard to show that when $\tau$ is diagonalizable and $n=\dim(\V)$ is small enough compared to $m$,
	condition \eqref{eq:SingleTauInvertible} is also sufficient for unique recovery in $\V$ under $\{i,\tau\}$\footnote{Lemma \ref{lem:DiagonalizableAutomorphismSingleTau} (with a different proof) is the main insight of the parallel and independent work of  \cite{Dokmanic:SPL-19}. That work studies the same problem as the present paper, but focuses only on diagonalizable automorphisms. On the other hand, it has the advantage that it considers countably many automorphisms, while here we only consider finitely many. }:
	
	\begin{lem} \label{lem:DiagonalizableAutomorphismSingleTau}
		Let $\tau$ be a diagonalizable automorphism of $\Ce^m$ and $\V$ generic subspace, $\dim(\V)\le \floor*{m/2}$. We have unique recovery in $\V$ under $\{i,\tau\}$ if and only if \eqref{eq:SingleTauInvertible} is true.
	\end{lem}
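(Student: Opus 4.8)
The plan is to dispatch the ``only if'' direction using the discussion already carried out after Lemma \ref{lem:SingleTauCharacterization}, and to prove the ``if'' direction by a parameter count on an incidence variety over $\Gr(n,m)$. For necessity, I would simply recall that unique recovery in $\V$ under $\{i,\tau\}$ forces $\V \cap \E_{\tau,\lambda} = 0$ for every $\lambda \neq 1$, which is exactly condition \eqref{eq:SingleTauInvertible}; this was derived for an arbitrary subspace, so here neither diagonalizability nor the bound $\dim(\V) \le \floor*{m/2}$ is needed.

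For sufficiency I would first reformulate the failure of unique recovery pointwise. By Lemma \ref{lem:SingleTauCharacterization}, unique recovery fails precisely when $\V \cap \tau(\V) \not\subset \E_{\tau,1}$, and since $\tau$ is invertible this is equivalent to the existence of a \emph{bad} vector, i.e.\ a $v \in \V$ with $\tau(v) \in \V$ but $v \notin \E_{\tau,1}$ (the last condition being equivalent to $\tau(v) \neq v$, as $\E_{\tau,1} = \ker(\tau - i)$ is $\tau$-invariant). I would then split bad vectors into two types according to whether $v$ is an eigenvector of $\tau$, and show that for each type the set of subspaces $\V$ admitting such a $v$ is contained in a proper Zariski-closed subset of $\Gr(n,m)$; a generic $\V$ avoids both and hence has no bad vector, giving unique recovery.

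Type I (eigenvector) bad vectors satisfy $\tau(v) = \mu v$ with $\mu \neq 1$, so they fill out $\Y = \cup_{\lambda \neq 1} \E_{\tau,\lambda}$, for which $\tau(v) \in \Span(v) \subset \V$ is automatic; here diagonalizability is convenient, since then the eigenvectors for $\lambda \neq 1$ exhaust this homogeneous variety whose codimension appears in \eqref{eq:SingleTauInvertible}. The bad condition becomes $\V \cap \Y \neq 0$, and Lemma \ref{lem:VarietyIntersectionGeneric} together with \eqref{eq:SingleTauInvertible} (i.e.\ $\dim(\V) \le \codim \Y$) gives $\V \cap \Y = 0$ for generic $\V$, ruling out Type I. For Type II (non-eigenvector) bad vectors, $v$ and $\tau(v)$ are linearly independent, so $\Span(v,\tau(v))$ is a $2$-plane contained in $\V$. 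I would form the incidence variety of pairs $([v],\V)$ with $v$ a non-eigenvector and $\Span(v,\tau(v)) \subset \V$, and bound its dimension by the dimension $m-1$ of its base (the space of lines $[v]$) plus the fiber dimension $\dim\{\V : \Span(v,\tau(v)) \subset \V\} = (n-2)(m-n)$. Projecting to $\Gr(n,m)$ and comparing with $\dim \Gr(n,m) = n(m-n)$, the image is a proper subvariety exactly when $(m-1) + (n-2)(m-n) < n(m-n)$, which simplifies to $2n \le m$, i.e.\ $\dim(\V) \le \floor*{m/2}$.

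I expect the Type II count to be the crux: one must check that the two hypotheses of the lemma line up precisely with the two sub-cases (the eigenvector case matching \eqref{eq:SingleTauInvertible} and the non-eigenvector case matching $\dim(\V) \le \floor*{m/2}$), and one must argue carefully that the projection of the locally closed incidence variety to $\Gr(n,m)$ lands in a proper closed set, so that its complement is the required nonempty Zariski-open set of generic subspaces. The remaining points are routine: the fiber-dimension computations for the sub-Grassmannians $\{\V : L \subset \V\} \cong \Gr(n-1,m-1)$ and $\{\V : W \subset \V\} \cong \Gr(n-2,m-2)$, and the passage from ``no bad vector'' back to unique recovery via Lemma \ref{lem:SingleTauCharacterization}.
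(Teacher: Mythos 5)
Your proposal is correct, and while the necessity direction coincides with the paper's (both just reuse the discussion following Lemma \ref{lem:SingleTauCharacterization}), your sufficiency argument takes a genuinely different route. The paper also reduces to Lemma \ref{lem:SingleTauCharacterization}, but then splits on the size of $\E_{\tau,1}$ rather than on the type of bad vector: when $\dim(\E_{\tau,1}) \le m-n$ it shows that the open condition $\dim(\V+\tau(\V))=2n$ (which forces $\V \cap \tau(\V)=0$) is nonempty by explicitly constructing a witness $\V=\Span(v_1,\dots,v_n)$, $v_i = w_i + w_{i+n}$, from $2n$ independent eigenvectors $w_1,\dots,w_{2n}$ of $\tau$ with no eigenvalue occurring more than $n$ times (this is where diagonalizability and \eqref{eq:SingleTauInvertible} enter); when $\dim(\E_{\tau,1}) > m-n$, hence $\ge n$, it passes to an eigenbasis and shows that on the open set of $\Gr(n,m)$ where the top $n\times n$ block of a basis matrix of $\V$ is invertible, $v_1=\tau(v_2)$ forces $v_1=v_2$. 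Your pointwise dichotomy -- eigenvector bad vectors killed by Lemma \ref{lem:VarietyIntersectionGeneric} together with \eqref{eq:SingleTauInvertible} (note $\Y\cap\V$ is a union of linear subspaces, so dimension $0$ means $\{0\}$), non-eigenvector bad vectors killed by the incidence count $(m-1)+(n-2)(m-n)<n(m-n)\iff 2n\le m$ -- is sound: the incidence set is constructible, the projection from $\mathbb{P}^{m-1}\times\Gr(n,m)$ to $\Gr(n,m)$ is a closed map, so the bad locus lies in a proper closed subvariety, and the complement of the two bad loci is nonempty open by irreducibility of $\Gr(n,m)$. What your route buys: it nowhere uses diagonalizability (contrary to your own aside about Type I, which needs none), only invertibility through Lemma \ref{lem:SingleTauCharacterization}, so verbatim it proves the automorphism case of the stronger Lemma \ref{lem:AutomorphismSingleTau}, for which the paper resorts to a more involved Jordan-form argument; it also cleanly decouples the two hypotheses, \eqref{eq:SingleTauInvertible} feeding the eigenvector case and $n\le\floor*{m/2}$ feeding the non-eigenvector count. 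What the paper's route buys: it is constructive and elementary (explicit witness subspaces, no incidence correspondences), and its paired-eigenvector construction is precisely the template that the paper then generalizes, via Jordan canonical form, to arbitrary and possibly non-invertible endomorphisms in Lemma \ref{lem:AutomorphismSingleTau} and Theorem \ref{thm:Endomorphisms}.
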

	\begin{proof}
		\emph{} $(\Rightarrow)$ By Lemma \ref{lem:SingleTauCharacterization} $\V \cap \tau(\V) \subset \E_{\tau,1}$. Hence $\V$ does not intersect $\E_{\tau,\lambda}$ for every $\lambda \neq 1$. This implies \eqref{eq:SingleTauInvertible}. $(\Leftarrow)$ Suppose first that $\dim (\E_{\tau,1} ) \le m-n$. The subset $\mathbb{X}$ of $\Gr(n,m)$ on which $\V \cap \tau(\V)=0$ contains the open subset $\mathbb{U}_{\tau}$ on which $\dim(\V + \tau(\V))=2n$. Then $\mathbb{U}_{\tau}$ is non-empty. To see this, note that there exist $2n$ linearly independent eigenvectors $w_1,\dots,w_{2n}$ of $\tau$ such that no more than $n$ of them correspond to the same eigenvalue. Ordering the $w_i$ such that eigenvectors corresponding to the same eigenvalue are placed consecutively, we then define $v_i = w_i+w_{i+n}, \, \forall i \in [n]$. Then $\V=\Span(v_1,\dots,v_n) \in \mathbb{U}_{\tau}$.
		
		Next, suppose that $\dim (\E_{\tau,1} ) > m-n$. Since $n \le \floor*{m/2}$ we have $\dim (\E_{\tau,1} ) \ge n$. Suppose that $v_1 = \tau(v_2)$ for $v_1,v_2 \in \V$. Let $\mathfrak{B}$ be a basis of $\Ce^m$ on which $\tau$ is represented by a diagonal matrix $T \in \Ce^{m \times m}$, and let $V \in \Ce^{m \times n}$ and $V \xi_1, V \xi_2 \in \Ce^m$ be the corresponding representations of a basis of $\V$, and of $v_1,v_2$ respectively, with $\xi_1,\xi_2 \in \Ce^n$. Then the equation $v_1 = \tau(v_2)$ is equivalent to $V \xi_1 = T V \xi_2$. Since $\dim (\E_{\tau,1} ) \ge n$ we may assume without loss of generality that the first $n$ diagonal elements of $T$ are equal to $1$. Letting $V_1 \in \Ce^{n \times n}$ be the top $n \times n$ submatrix of $V$, this implies that $V_1 \xi_1 = V_1 \xi_2$. Then $V_1$ is invertible on a non-empty open subset $\mathbb{U}_{\tau}'$ of $\Gr(n,m)$, on which $v_1=v_2$. Thus $\V \cap \tau(\V) \subset \E_{\tau,1}, \, \forall \V \in \mathbb{U}_{\tau}'$. In conclusion, there is an open set $\mathbb{U} =\mathbb{U}_{\tau}$ or $\mathbb{U} =\mathbb{U}_{\tau}'$, such that for any $\V \in \mathbb{U}$ we have $\V \cap \tau(\V) \subset \E_{\tau,1}$, and so we are done by Lemma \ref{lem:SingleTauCharacterization}. 
	\end{proof}
	
	The extension to multiple automorphisms follows from Lemma \ref{lem:DiagonalizableAutomorphismSingleTau} and the fact that the intersection of finitely many non-empty open sets of $\Gr(n,m)$ is non-empty and open:
	\begin{prp} \label{prp:DiagonalizableAutomorphisms}
		Let $\T$ be a finite set of automorphisms of $\Ce^m$ such that for any $\tau_1,\tau_2 \in \T$ we have that $\tau_1^{-1}\tau_2$ is diagonalizable. Let $\V$ be a generic subspace with $\dim(\V)\le \floor*{m/2}$. We have unique recovery in $\V$ under $\T$ if and only if \eqref{eq:SingleTauInvertible} is true for every $\tau=\tau_1^{-1}\tau_2$ with $\tau_1,\tau_2 \in \T$.
	\end{prp}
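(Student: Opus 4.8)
The plan is to reduce the case of multiple automorphisms to the single-automorphism case already settled by Lemma \ref{lem:DiagonalizableAutomorphismSingleTau}, and then to patch the finitely many resulting Zariski-open conditions together using the irreducibility of $\Gr(n,m)$. First I would reformulate unique recovery in $\V$ under $\T$ in terms of pairs. By Definition \ref{dfn:UniqueRecovery}, unique recovery under $\T$ means that whenever $\tau_1(v_1)=\tau_2(v_2)$ with $\tau_1,\tau_2 \in \T$ and $v_1,v_2 \in \V$, we have $v_1=v_2$. Since each $\tau_i$ is an automorphism, the equation $\tau_1(v_1)=\tau_2(v_2)$ is equivalent to $v_2 = \tau_2^{-1}\tau_1(v_1)$, i.e.\ to $(\tau_1^{-1}\tau_2)(v_2)=v_1$. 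Setting $\tau=\tau_1^{-1}\tau_2$, this says precisely that $v_1=\tau(v_2)$, which is the recovery condition for the two-element set $\{i,\tau\}$. Thus unique recovery in $\V$ under $\T$ is equivalent to simultaneous unique recovery in $\V$ under $\{i,\tau\}$ for \emph{every} $\tau$ of the form $\tau_1^{-1}\tau_2$ with $\tau_1,\tau_2\in\T$. (The diagonal pairs $\tau_1=\tau_2$ give $\tau=i$, which is trivially fine.)

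Next I would apply Lemma \ref{lem:DiagonalizableAutomorphismSingleTau} to each such $\tau=\tau_1^{-1}\tau_2$. By hypothesis every such $\tau$ is diagonalizable, and $\dim(\V)\le\floor*{m/2}$, so the lemma applies: for each fixed pair there is a non-empty Zariski-open subset $\mathbb{U}_{\tau}\subset\Gr(n,m)$ on which condition \eqref{eq:SingleTauInvertible} holding for $\tau$ is equivalent to unique recovery in $\V$ under $\{i,\tau\}$. Since $\T$ is finite, there are only finitely many pairs $(\tau_1,\tau_2)$, hence finitely many such automorphisms $\tau$ and finitely many open sets $\mathbb{U}_{\tau}$.

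For the forward direction ($\Rightarrow$), unique recovery under $\T$ forces unique recovery under each $\{i,\tau\}$ by the pairwise reduction above, and then the forward implication of Lemma \ref{lem:DiagonalizableAutomorphismSingleTau} yields \eqref{eq:SingleTauInvertible} for each $\tau=\tau_1^{-1}\tau_2$. For the reverse direction ($\Leftarrow$), I would intersect the finitely many open sets $\mathbb{U}_{\tau}$. Using the quoted property of the Zariski topology—that a finite intersection of non-empty open subsets of the irreducible space $\Gr(n,m)$ is again non-empty and open—the intersection $\mathbb{U}=\bigcap_{\tau_1,\tau_2}\mathbb{U}_{\tau_1^{-1}\tau_2}$ is a non-empty open set. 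For any generic $\V\in\mathbb{U}$, condition \eqref{eq:SingleTauInvertible} for every $\tau$ gives unique recovery under each $\{i,\tau\}$, and by the pairwise equivalence this is exactly unique recovery in $\V$ under $\T$. The only subtle point, and the step I expect to require the most care, is the passage between ``generic'' statements: Lemma \ref{lem:DiagonalizableAutomorphismSingleTau} is an equivalence valid on an open set depending on $\tau$, so one must be careful to phrase the conclusion as holding on the common refinement $\mathbb{U}$ rather than on any single $\mathbb{U}_{\tau}$, and to invoke irreducibility of $\Gr(n,m)$ to guarantee that this refinement is not empty. No further algebraic computation is needed beyond what Lemma \ref{lem:DiagonalizableAutomorphismSingleTau} already supplies.
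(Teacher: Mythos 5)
Your proposal is correct and takes essentially the same route as the paper: reduce unique recovery under $\T$ to unique recovery under $\{i,\tau_1^{-1}\tau_2\}$ for each pair (using invertibility), apply Lemma \ref{lem:DiagonalizableAutomorphismSingleTau} to each such diagonalizable $\tau$, and intersect the finitely many non-empty Zariski-open subsets of the irreducible $\Gr(n,m)$. The paper compresses exactly this argument into the single sentence preceding the proposition, so your write-up simply supplies the details it leaves implicit.
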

	
	Even though the invertibility and diagonalizability requirement of Proposition \ref{prp:DiagonalizableAutomorphisms} may seem too strong, it is satisfied by our canonical example where $\T$ is the set of $m!$ permutations on the $m$ coordinates of $\Ce^m$. In fact, more is true:
	
	\begin{prp} \label{prp:Permutations}
		Let $\T$ be the permutations on the $m$ coordinates of $\Ce^m$. Then $\dim (\E_{\pi,\lambda}) \le m-\floor*{m/2}, \, \forall \pi \in \T, \, \forall \lambda \neq 1$. Hence, for generic subspace $\V $ with $\dim(\V) \le \floor*{m/2}$ we have unique recovery in $\V$ under $\T$.
	\end{prp}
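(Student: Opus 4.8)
The plan is to reduce everything to the cycle combinatorics of a permutation and then invoke Proposition \ref{prp:DiagonalizableAutomorphisms}. First I would recall that a permutation matrix is diagonalizable over $\Ce$ and that its eigenvalues are roots of unity dictated by its cycle type: if $\pi$ decomposes into disjoint cycles of lengths $c_1,\dots,c_r$ with $\sum_j c_j = m$, then in the basis adapted to the cycles the matrix of $\pi$ is block diagonal with cyclic blocks, so its characteristic polynomial is $\prod_j (x^{c_j}-1)$. Since each factor $x^{c_j}-1$ is squarefree, a primitive $d$-th root of unity appears as an eigenvalue of the cycle of length $c_j$ exactly once precisely when $d \mid c_j$; by diagonalizability this makes $\dim(\E_{\pi,\lambda})$ equal to the number of cycles whose length is divisible by the multiplicative order of $\lambda$.

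The heart of the argument is then an elementary counting estimate. Fix $\lambda \neq 1$ and let $d = \mathrm{ord}(\lambda) \ge 2$. Writing $S = \{\, j : d \mid c_j \,\}$, the previous step gives $\dim(\E_{\pi,\lambda}) = |S|$. Every cycle indexed by $S$ has length $c_j \ge d \ge 2$, so $\sum_{j \in S} c_j \ge 2|S|$, while trivially $\sum_{j \in S} c_j \le \sum_j c_j = m$. Combining these yields $|S| \le \floor*{m/2}$, hence
\[
\dim(\E_{\pi,\lambda}) \le \floor*{m/2} \le m - \floor*{m/2},
\]
which is the claimed bound (and is in fact slightly stronger). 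Note that the restriction $\lambda \neq 1$ is essential here, since for $\lambda = 1$ every cycle contributes and $\dim(\E_{\pi,1})$ can be as large as $m$.

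To conclude unique recovery I would apply Proposition \ref{prp:DiagonalizableAutomorphisms}. For any $\tau_1,\tau_2 \in \T$ the endomorphism $\tau = \tau_1^{-1}\tau_2$ is again a permutation, hence diagonalizable, so the standing hypotheses of that proposition are met, and it remains only to verify condition \eqref{eq:SingleTauInvertible} for each such $\tau$. Using that $\dim(\cup_{\lambda \neq 1}\E_{\tau,\lambda}) = \max_{\lambda \neq 1}\dim(\E_{\tau,\lambda})$ (a finite union of linear subspaces, per \S\ref{subsubsection:AG}) together with the bound just proved, I get
\[
\codim\Big(\cup_{\lambda \neq 1}\E_{\tau,\lambda}\Big) = m - \max_{\lambda \neq 1}\dim(\E_{\tau,\lambda}) \ge m - \floor*{m/2},
\]
so any $\V$ with $\dim(\V) \le \floor*{m/2} \le m - \floor*{m/2}$ satisfies \eqref{eq:SingleTauInvertible}. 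Since $\T$ is finite there are finitely many pairs $(\tau_1,\tau_2)$, so the corresponding Zariski-open subsets of $\Gr(n,m)$ have non-empty open intersection, producing a single generic $\V$ that works for all of $\T$ at once.

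I expect the only genuinely delicate point to be the eigenvalue bookkeeping of the first step—verifying that $\dim(\E_{\pi,\lambda})$ equals the number of cycles whose length is divisible by $\mathrm{ord}(\lambda)$—while the counting estimate $|S| \le \floor*{m/2}$ and the appeal to Proposition \ref{prp:DiagonalizableAutomorphisms} are routine.
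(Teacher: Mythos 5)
Your proof is correct and follows essentially the same route as the paper: decompose $\pi$ into disjoint cycles, use the roots-of-unity eigenstructure and additivity of eigenspace dimensions across cycles to bound $\dim(\E_{\pi,\lambda})$ for $\lambda \neq 1$ by $\floor*{m/2}$, then invoke Proposition \ref{prp:DiagonalizableAutomorphisms}. Your only refinement is computing $\dim(\E_{\pi,\lambda})$ exactly as the number of cycles whose length is divisible by $\mathrm{ord}(\lambda)$, where the paper settles for the upper bound given by the number of cycles of length at least $2$.
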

	\begin{proof}
		This follows from basic structural facts about permutations. Let $\pi \in \T$ be a permutation. Then $\pi$ is the product of $c \ge 1$ disjoint cycles, say $\pi = \pi_1\cdots \pi_c$. Suppose that cycle $\pi_i$ cycles $m_i$ coordinates, i.e., it has length $m_i$. Since the cycles are disjoint
		we have $m = \sum_{i=1}^c m_i$. Now, each cycle is diagonalizable with $m_i$ eigenvalues equal to the $m_i$ complex roots of unity, i.e., the roots of the equation $x^{m_i} = 1$. Since the cycles are disjoint, the dimensions of the eigenspaces of $\pi$ are counted additively across cycles. Hence for $\lambda \neq 1$ the dimension of $\E_{\pi,\lambda}$ is at most equal to the number of cycles of length at least $2$. But the number of such cycles is at most $\floor*{m/2}$. Hence $\dim (\E_{\pi,\lambda}) \le \floor*{m/2}$. But $\floor*{m/2} \le m - \floor*{m/2}$, i.e., $\dim (\E_{\pi,\lambda}) \le  m - \floor*{m/2}$. The rest of the statement is a corollary of Proposition \ref{prp:DiagonalizableAutomorphisms}.
	\end{proof}

	\subsection{Recovery under arbitrary endomorphisms} \label{subsection:GeneralResult}
	
	\subsubsection{Unique recovery for all points}
	
	The arguments that led to Proposition \ref{prp:DiagonalizableAutomorphisms} relied heavily on the invertibility of the endomorphisms in $\T$. This is because in that case unique recovery in $\V$ under $\{\tau_1,\tau_2\}$ is equivalent to unique recovery in $\V$ under $\{i,\tau_1^{-1} \tau_2\}$, where $i$ is the identity map. It was this feature that helped us understand the homomorphic sensing property of Definition \ref{dfn:UniqueRecovery} in terms of $\V$ intersecting its image $\tau_1^{-1} \tau_2(\V)$. In turn, the key objects controlling this intersection turned out to be the eigenspaces of $\tau=\tau_1^{-1} \tau_2$ corresponding to eigenvalues different than $1$, as per Lemma \ref{lem:DiagonalizableAutomorphismSingleTau}, whose proof however made explicit use of the diagonalizability of $\tau$. As a consequence, generalizing Proposition \ref{prp:DiagonalizableAutomorphisms} to arbitrary endomorphisms for which $\tau_1$ might not even be invertible, let alone $\tau_1^{-1} \tau_2$ diagonalizable, is not straightforward.
	
	\begin{ex}
		In unlabeled sensing, a permutation composed with a coordinate projection is in general neither invertible nor diagonalizable. E.g., consider a cycle $\pi$ of length $3$, a coordinate projection $\rho$ onto the first two coordinates, and their composition $\rho \pi$:
		\begin{align}
		\pi = \begin{bmatrix}
		0 & 0 & 1 \\
		1 & 0 & 0 \\
		0 & 1 & 0
		\end{bmatrix}, \,
		\rho = \begin{bmatrix}
		1 & 0 & 0 \\
		0 & 1 & 0 \\
		0 & 0 & 0
		\end{bmatrix}, \,
		\rho\pi = \begin{bmatrix}
		0 & 0 & 1 \\
		1 & 0 & 0 \\
		0 & 0 & 0
		\end{bmatrix}. \nonumber
		\end{align} First, $\rank(\rho \pi)=2$ so that $\rho \pi$ is not invertible. Secondly, $\rho \pi$ is nilpotent, i.e., $(\rho \pi)^3=0$, and so the only eigenvalue of $\rho \pi$ is zero. This means that $\rho \pi$ is similar to a $3 \times 3$ Jordan block of eigenvalue $0$, i.e., $\rho \pi$ is far from diagonalizable.
	\end{ex} We proceed by developing two devices. The first one is a generalization of Lemma \ref{lem:DiagonalizableAutomorphismSingleTau} and overcomes the challenge of the potential non-diagonalizability of the endomorphisms.
	
	\begin{lem} \label{lem:AutomorphismSingleTau}
		Let $\V$ be a generic subspace with $\dim(\V) \le \floor*{m/2}$, and $\tau$ any endomorphism of $\Ce^m$ for which \eqref{eq:SingleTauInvertible} is true. Then we have unique recovery in $\V$ under $\{i,\tau\}$.
	\end{lem}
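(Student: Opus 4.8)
The plan is to reduce unique recovery in $\V$ under $\{i,\tau\}$ to two generic conditions and then verify each by a dimension count. Unpacking Definition \ref{dfn:UniqueRecovery} for $\T=\{i,\tau\}$, the only nontrivial implications to rule out are $\tau(v_1)=\tau(v_2)$ with $v_1\neq v_2$, and $v_1=\tau(v_2)$ with $v_1\neq v_2$ (the case $\tau(v_1)=v_2$ being symmetric to the latter). Thus it suffices to show, for generic $\V$: \textbf{(A)} $\V\cap\ker\tau=0$, so that $\tau$ is injective on $\V$; and \textbf{(B)} whenever $u\in\V$ and $\tau(u)\in\V$ then $\tau(u)=u$, i.e. $\V\cap\tau^{-1}(\V)\subseteq\E_{\tau,1}$. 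Condition (A) is immediate: since $\ker\tau=\E_{\tau,0}$ is one of the eigenspaces in $\cup_{\lambda\neq 1}\E_{\tau,\lambda}$, hypothesis \eqref{eq:SingleTauInvertible} gives $\dim(\V)\le\codim(\cup_{\lambda\neq 1}\E_{\tau,\lambda})\le\codim(\ker\tau)$, so Lemma \ref{lem:VarietyIntersectionGeneric} forces $\V\cap\ker\tau=0$ for generic $\V$.

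The substance is (B), which I would establish by showing that the set of $\V$ violating it lies in a proper closed subvariety of $\Gr(n,m)$. Form the incidence variety $\mathcal{I}=\{(\V,[u])\in\Gr(n,m)\times\mathbb{P}^{m-1}:\ u\in\V,\ \tau(u)\in\V\}$ and let $\mathcal{I}_{\mathrm{bad}}\subseteq\mathcal{I}$ be the locally closed subset where in addition $\tau(u)\neq u$. A subspace $\V$ violates (B) exactly when it lies in the image of $\mathcal{I}_{\mathrm{bad}}$ under the projection to $\Gr(n,m)$; since $\mathbb{P}^{m-1}$ is projective this projection is proper, so $\pi(\overline{\mathcal{I}_{\mathrm{bad}}})$ is closed of dimension at most $\dim(\mathcal{I}_{\mathrm{bad}})$. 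It therefore suffices to prove $\dim(\mathcal{I}_{\mathrm{bad}})<\dim\Gr(n,m)=n(m-n)$.

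To bound $\dim(\mathcal{I}_{\mathrm{bad}})$ I would instead project to $\mathbb{P}^{m-1}$ and stratify the base by the position of $u$. If $u$ is an eigenvector of eigenvalue $\lambda\neq 1$ (allowing $\lambda=0$), then $\tau(u)\in\Span(u)$ and the fiber is $\{\V:\Span(u)\subseteq\V\}\cong\Gr(n-1,m-1)$, of dimension $(n-1)(m-n)$, while the base $\cup_{\lambda\neq 1}\mathbb{P}(\E_{\tau,\lambda})$ has dimension $d-1$, where $d:=\max_{\lambda\neq 1}\dim(\E_{\tau,\lambda})$; this stratum has dimension $(d-1)+(n-1)(m-n)$. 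If $u$ is not an eigenvector, then $u,\tau(u)$ are independent and the fiber $\{\V:\Span(u,\tau(u))\subseteq\V\}\cong\Gr(n-2,m-2)$ has dimension $(n-2)(m-n)$, over a base of dimension at most $m-1$, giving dimension at most $(m-1)+(n-2)(m-n)$. Comparing both with $n(m-n)$: the first is strictly smaller iff $d\le m-n$, which is precisely \eqref{eq:SingleTauInvertible} since $d=m-\codim(\cup_{\lambda\neq 1}\E_{\tau,\lambda})$; the second is strictly smaller iff $2n\le m$, which is precisely $\dim(\V)\le\floor*{m/2}$. Hence $\dim(\mathcal{I}_{\mathrm{bad}})<n(m-n)$, (B) holds off a proper closed subvariety, and intersecting with the open dense locus from (A) and from Lemma \ref{lem:VarietyIntersectionGeneric} (a finite intersection of non-empty Zariski-open subsets of the irreducible $\Gr(n,m)$ is non-empty and open) yields a generic $\V$ with unique recovery.

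The main obstacle is the non-eigenvector stratum: it is exactly here that the stronger bound $\dim(\V)\le\floor*{m/2}$, rather than only \eqref{eq:SingleTauInvertible}, is consumed, and the count closes only because $u$ and $\tau(u)$ span a full $2$-plane forced to lie inside $\V$, cutting the fiber dimension down to $(n-2)(m-n)$. This is also what lets the argument sidestep the non-diagonalizability of $\tau$ entirely: no eigenbasis is used, only the codimensions of the genuine eigenspaces $\E_{\tau,\lambda}$ and a fiber-dimension estimate. A secondary technical point is the clean bookkeeping of closures and constructible images needed to conclude that violating (B) confines $\V$ to a proper closed subvariety, so that its avoidance is a genuinely generic condition.
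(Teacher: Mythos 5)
Your proof is correct, and it reaches the conclusion by a genuinely different route than the paper's. The paper's argument (sketched in the text, with details deferred to the full version) extends the proof of Lemma \ref{lem:DiagonalizableAutomorphismSingleTau}: it splits into cases according to whether $\dim(\E_{\tau,1}) \le m-n$, and in the hard case shows the good locus contains a non-empty open subset of $\Gr(n,m)$ by \emph{explicitly constructing} a witness subspace $\V=\Span(v_1,\dots,v_n)$ from suitably paired (generalized) eigenvectors read off the Jordan canonical form of $\tau$. You instead bound the \emph{bad} locus from above: after a sound unpacking of Definition \ref{dfn:UniqueRecovery} into your conditions (A) and (B) (note in passing that (B) already implies (A): if $0 \neq u\in\V\cap\ker(\tau)$ then $\tau(u)=0\in\V$, and (B) would force $u=\tau(u)=0$), you form the incidence correspondence in $\Gr(n,m)\times\mathbb{P}^{m-1}$, project to $\mathbb{P}^{m-1}$, and stratify by whether $u$ is an eigenvector with eigenvalue $\lambda\neq 1$ or not an eigenvector at all. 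Your two counts, $(d-1)+(n-1)(m-n)$ and $(m-1)+(n-2)(m-n)$ measured against $n(m-n)$, are exactly right, and they cleanly isolate the role of each hypothesis: \eqref{eq:SingleTauInvertible} is equivalent to the eigenvector stratum being small, and $2n\le m$ is equivalent to the non-eigenvector stratum being small. What your approach buys: no Jordan form, no case split on $\dim(\E_{\tau,1})$, non-diagonalizability never enters, and genericity follows from properness of the projection and the fiber-dimension theorem rather than from exhibiting an explicit point of an open set. What the paper's approach buys: the intermediate statement it proves (maximality of $\dim(\V+\tau(\V))$ on a non-empty open set) is a reusable ingredient in the machinery leading to Theorem \ref{thm:Endomorphisms}, and the witness construction stays within elementary linear algebra once the Jordan structure is granted. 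Both arguments establish the lemma as stated.
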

	\begin{proof}
		\emph{(Sketch)} The arguments are similar in spirit with those in the proof of Lemma \ref{lem:DiagonalizableAutomorphismSingleTau} but technically more involved. Let $n=\dim(\V)$. The difficult part is when $\dim(\E_{\tau,1}) \le m-n$, where we prove the existence a non-empty open set $\mathbb{U}$ of $\Gr(n,m)$, such that for every $\V \in \mathbb{U}$ we have $\dim(\V+\tau(\V)) = n+\rank(\tau)$, which is the maximal dimension that the subspace $\V+\tau(\V)$ can have. In analogy with the diagonalizable case, this can be done by working with the Jordan canonical form of $\tau$ and constructing a $\V=\Span(v_1,\dots,v_n) \in \mathbb{U}$ for which the $v_i$ are suitably paired (generalized) eigenvectors of $\tau$.
	\end{proof}
	
	Our second device overcomes the challenge of potential lack of invertibility. We need some notation. Let $\tau_1,\tau_2$ be endomorphisms of $\Ce^m$ and let $\rho$
	be a projection onto $\im(\tau_2)$. Define the variety $\Y_{\rho\tau_1,\tau_2}$ as the set of $w \in \Ce^m$ for which $\rho \tau_1(w)$ and $\tau_2(w)$ are linearly dependent, i.e.,
	\begin{align}
	\Y_{\rho\tau_1,\tau_2} = \big\{w: \, \dim \big(\Span(\rho \tau_1(w),\tau_2(w))\big) \le 1\big\}.
	\end{align} This is indeed a variety because if $\tau_1,\tau_2,\rho$ are represented by matrices $T_1, T_2, P$, then $\Y_{\rho\tau_1,\tau_2}$ is defined by the vanishing of all $2 \times 2$ minors of the matrix $[PT_1w \, \, \, T_2w]$, which are quadratic polynomials in $w$. If $w \in \Y_{\rho\tau_1,\tau_2}$ then there exists some $\lambda_w \in \Ce$ such that
	either $ \tau_2(w) = \lambda_w \rho \tau_1(w)$ or $\rho \tau_1(w) = \lambda_w \tau_2(w)$. Hence $\Y_{\rho\tau_1,\tau_2}$ is the union of all generalized eigenspaces of the endomorphism pairs $(\rho \tau_1,\tau_2)$ and $(\tau_2,\rho \tau_1)$. Note that $\ker(\rho \tau_1-\tau_2)$ is the generalized eigenspace corresponding to eigenvalue $1$, while $\ker(\rho \tau_1), \ker(\tau_2)$ are the generalized eigenspaces of $(\rho \tau_1,\tau_2), (\tau_2,\rho \tau_1)$ respectively, of eigenvalue $0$. In analogy with the automorphism case of Lemma \ref{lem:SingleTauCharacterization} where the eigenspace of eigenvalue $1$ was irrelevant for unique recovery, it turns out that in general the same is true for the generalized eigenspaces of eigenvalues $1$ and $0$. Removing their union
	\begin{align}
	\Z_{\rho\tau_1,\tau_2} = \ker(\tau_2) \cup \ker(\rho \tau_1) \cup \ker(\rho \tau_1-\tau_2),
	\end{align} from $\Y_{\rho\tau_1,\tau_2}$ yields the quasi-variety
	\begin{align}
	\U_{\rho\tau_1,\tau_2}=\Y_{\rho\tau_1,\tau_2} \setminus \Z_{\rho\tau_1,\tau_2}. \label{eq:U}
	\end{align} $\U_{\rho\tau_1,\tau_2}$ plays an analogous role with $\cup_{\lambda \neq 1} \E_{\tau,\lambda}$ when $\tau$ is an automorphism. More precisely, in analogy with \eqref{eq:SingleTauInvertible}, the next theorem shows that the condition that controls homomorphic sensing in general is	
	\begin{align}
	\dim (\V) \le \codim \big(\U_{\rho\tau_1,\tau_2} \big). \label{eq:Udimension}
	\end{align} 
	
	\begin{thm} \label{thm:Endomorphisms}
		Let $\T$ be a finite set of endomorphisms of $\Ce^m$ such that for every $\tau \in \T$
		we have $\rank(\tau) \ge 2n$, for some $n \le \floor*{m/2}$. Then for a general subspace $\V$ of
		dimension $n$, we have unique recovery in $\V$ under $\T$ as long as for every $\tau_1,\tau_2 \in \T$
		there is a projection $\rho$ onto $\im(\tau_2)$ such that for $\U_{\rho\tau_1,\tau_2}$ defined in \eqref{eq:U} condition \eqref{eq:Udimension} is true.
	\end{thm}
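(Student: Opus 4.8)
The plan is to reduce the statement to a single ordered pair of endomorphisms, set up the right genericity conditions, and then dispose of a hypothetical witness of non-recovery by a case split governed by the variety $\U_{\rho\tau_1,\tau_2}$.

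\emph{Reduction to pairs.} Since $\T$ is finite, there are only finitely many ordered pairs $(\tau_1,\tau_2)\in\T\times\T$, and unique recovery in $\V$ under $\T$ asks, for each such pair, that $\tau_1(v_1)=\tau_2(v_2)$ with $v_1,v_2\in\V$ forces $v_1=v_2$. Because the intersection of finitely many non-empty Zariski-open subsets of $\Gr(n,m)$ is again non-empty and open, it suffices to produce, for each fixed pair, a non-empty open set $\mathbb{U}_{\tau_1,\tau_2}\subset\Gr(n,m)$ on which recovery for that pair holds; the desired generic $\V$ lives in the intersection.

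\emph{Genericity reductions.} Fix a pair. Since $\rank(\tau_i)\ge 2n>n$ we have $\dim\ker(\tau_i)\le m-2n$, so Lemma \ref{lem:VarietyIntersectionGeneric} gives $\V\cap\ker(\tau_i)=0$ for generic $\V$; hence $\tau_1,\tau_2$ are injective on $\V$. I would also impose, again generically, that $\V\cap\U_{\rho\tau_1,\tau_2}=\emptyset$: the variety $\Y_{\rho\tau_1,\tau_2}$ is cut out by the homogeneous quadrics of \eqref{eq:U}, so condition \eqref{eq:Udimension} together with (the quasi-variety form of) Lemma \ref{lem:VarietyIntersectionGeneric} forces this intersection to be empty for generic $\V$; note $0\in\Z_{\rho\tau_1,\tau_2}$, so no nonzero point can survive. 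Each of these is a non-empty open condition.

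\emph{Core argument, dependent case.} Suppose $\tau_1(v_1)=\tau_2(v_2)$ with $v_1,v_2\in\V$. Applying a projection $\rho$ onto $\im(\tau_2)$ and using that $\rho$ fixes $\im(\tau_2)\ni\tau_2(v_2)$ pointwise yields $\rho\tau_1(v_1)=\tau_2(v_2)$; proving $v_1=v_2$ from this stronger relation suffices. Split on whether $v_1,v_2$ are linearly dependent. If they are, write $v_2=c\,v_1$; the cases $v_1=0$ or $c=0$ collapse to $v_1=v_2=0$ by injectivity of $\tau_1,\tau_2$ on $\V$. If moreover $v_1\neq v_2$ then $c\neq 0,1$, and $\rho\tau_1(v_1)=c\,\tau_2(v_1)$ with both vectors nonzero (else $v_1\in\ker(\tau_2)\cap\V=0$), placing $v_1$ in $\Y_{\rho\tau_1,\tau_2}$ yet outside $\Z_{\rho\tau_1,\tau_2}$ (it avoids $\ker(\tau_2)$, $\ker(\rho\tau_1)$, and, since $\rho\tau_1(v_1)-\tau_2(v_1)=(c-1)\tau_2(v_1)\neq0$, also $\ker(\rho\tau_1-\tau_2)$). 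Thus $0\neq v_1\in\V\cap\U_{\rho\tau_1,\tau_2}$, contradicting the previous paragraph.

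\emph{Independent case, the main obstacle.} The remaining case, $v_1,v_2$ linearly independent (so $n\ge 2$), is the genuine difficulty and is ruled out by a dimension count rather than by $\U_{\rho\tau_1,\tau_2}$. The linear map $(v,v')\mapsto\rho\tau_1(v)-\tau_2(v')$ on $\Ce^m\times\Ce^m$ has image $\im(\rho\tau_1)+\im(\tau_2)=\im(\tau_2)$ of dimension $\ge 2n$, so its kernel $K$ satisfies $\dim K\le 2m-2n$; this is precisely where $\rank(\tau_i)\ge 2n$ and $n\le\floor*{m/2}$ are consumed. Sending a linearly independent solution pair to the plane $\langle v,v'\rangle$ (fibers of dimension $\ge 1$ by simultaneous scaling) bounds the bad $2$-planes by $\dim K-1$, and fibering over subspaces $\V\supseteq\langle v,v'\rangle$ contributes $(n-2)(m-n)$, so the locus of offending $\V$ has dimension at most $(\dim K-1)+(n-2)(m-n)\le n(m-n)-1<n(m-n)=\dim\Gr(n,m)$, a proper subvariety that a generic $\V$ avoids. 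This is the two-map, non-invertible analogue of the maximal-dimension fact $\dim(\V+\tau(\V))=n+\rank(\tau)$ underlying Lemma \ref{lem:AutomorphismSingleTau}. Combining both cases gives $v_1=v_2$ on a non-empty open $\mathbb{U}_{\tau_1,\tau_2}$, completing the reduction. I expect the genuine technical work to lie in this independent-pair count and in making the quasi-variety intersection in Lemma \ref{lem:VarietyIntersectionGeneric} rigorous for $\U_{\rho\tau_1,\tau_2}$.
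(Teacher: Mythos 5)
Your proposal is correct in substance, but it is not the paper's argument --- it is a genuinely different and arguably more self-contained route. The paper fixes a pair $(\tau_1,\tau_2)$, views $\V$ as a generic subspace of a generic $\H$ of dimension $k=\rank(\tau_2)$ on which $\tau_2$ restricts to an isomorphism onto $\im(\tau_2)$, and reduces recovery under $\{\tau_1,\tau_2\}$ to recovery under $\{i|_{\H},\tau_{\H}\}$ with $\tau_{\H}=(\tau_2|_{\H})^{-1}\rho\tau_1|_{\H}$; the heavy lifting is then delegated to Lemma \ref{lem:AutomorphismSingleTau} (whose proof needs the Jordan-form construction), fed by the containment $\E_{\tau_{\H},\lambda}\subset\U_{\rho\tau_1,\tau_2}\cap\H$ and condition \eqref{eq:Udimension}. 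You never invert $\tau_2$ and never invoke Lemma \ref{lem:AutomorphismSingleTau}: you split a hypothetical solution pair by linear dependence, dispose of dependent pairs exactly where the paper would (they are the ``generalized eigenvector'' configurations, i.e.\ points of $\U_{\rho\tau_1,\tau_2}$, killed by \eqref{eq:Udimension} via Lemma \ref{lem:VarietyIntersectionGeneric}), and kill independent pairs by a direct incidence count: $\dim K\le 2m-2n$ since the image of $(v,v')\mapsto\rho\tau_1(v)-\tau_2(v')$ is $\im(\tau_2)$, scaling gives fibers of dimension at least $1$ over the bad $2$-planes, and $(2m-2n-1)+(n-2)(m-n)=n(m-n)-1<\dim\Gr(n,m)$; this arithmetic and the fiber argument check out. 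The paper's route buys reuse and narrative coherence (the hard work is isolated once in Lemma \ref{lem:AutomorphismSingleTau}, and $\U_{\rho\tau_1,\tau_2}$ visibly generalizes the union of eigenspaces); your route buys independence from that lemma and makes the division of labor transparent --- the rank hypothesis alone eliminates independent pairs, while \eqref{eq:Udimension} is needed only for the dependent (projective) ones. Three caveats, all at the same level of informality as the paper's own sketch: (i) Lemma \ref{lem:VarietyIntersectionGeneric} must be applied to the cone $\overline{\U}_{\rho\tau_1,\tau_2}$ obtained by Zariski closure, with $0\in\Z_{\rho\tau_1,\tau_2}$ then excluding the origin, as you note; (ii) in the incidence count the images are a priori only constructible, so pass to closures (or use completeness of the Grassmannian) before comparing dimensions and taking complements; (iii) the relation $\rho\tau_1(v_1)=\tau_2(v_2)$ is \emph{weaker} than $\tau_1(v_1)=\tau_2(v_2)$, not stronger --- your logic (uniqueness over the larger solution set suffices) is still the right one, but note that in the sub-case $v_2=0$ you genuinely need the original equation to use injectivity of $\tau_1$ on $\V$, since $\ker(\rho\tau_1)\cap\V=0$ is not among your genericity reductions.
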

	\vspace{-0.3cm}
	\begin{proof} \emph{(Sketch)} The key idea for the case $\T=\{\tau_1,\tau_2\}$ is to view $\V$ as a generic $n$-dimensional subspace of a generic $k$-dimensional subspace $\H$, where $k = \rank(\tau_2)$. Then 
	$\tau_2|_{\H}$ is an isomorphism from $\H$ onto $\im(\tau_2)$, and so unique recovery in $\V$ under $\{\tau_1,\tau_2\}$ follows from unique recovery in $\V$ under $\{i|_{\H},\tau_{\H}\}$, with $\tau_{\H}=\big(\tau_2|_{\H} \big)^{-1} \rho \tau_1|_{\H}$ endomorphism of $\H$. By Lemma \ref{lem:AutomorphismSingleTau} we are done if $\dim \big(\E_{\tau_{\H},\lambda}\big) \le k-n, \, \forall \lambda \neq 1$. Let $\tau_{\H}(w) = \lambda w$, then $\tau_2 \big(\tau_2|_{\H} \big)^{-1} \rho \tau_1(w) = \lambda \tau_2(w)$. Now, $\tau_2 \big(\tau_2|_{\H} \big)^{-1} \rho = \rho$, thus $\rho \tau_1(w) = \lambda \tau_2(w)$. Hence, 
$\E_{\tau_{\H},\lambda}  \subset    \big(\U_{\rho\tau_1,\tau_2} \cap \H\big)$ and the rest follows from dimension considerations. 
\end{proof}

In unlabeled sensing the endomorphisms in $\T$ have the form $\rho \pi$, where $\pi$ is a permutation and $\rho$ is a coordinate projection. Then as per Theorem \ref{thm:Endomorphisms} if $\dim (\V) \le \codim \big(\U_{\rho_2 \rho_1 \pi_1,\rho_2 \pi_2}\big)$ one has unique recovery in $\V$ under $\{\rho_1 \pi_1, \rho_2 \pi_2\}$. Furthermore, via a combinatorial algebraic-geometric argument we obtain a convenient lower bound on $\codim \big(\U_{\rho_2 \rho_1 \pi_1,\rho_2 \pi_2}\big)$:		
	\begin{prp} \label{prp:ProjectionsPermutations}
		Let $\pi_1,\pi_2$ be permutations on the $m$ coordinates of $\Ce^m$ and $\rho_1,\rho_2$ coordinate projections. For $\U_{\rho_2 \rho_1 \pi_1,\rho_2 \pi_2}$ defined in \eqref{eq:U} we have
		\begin{align}
		\floor*{\rank(\rho_2)/2} \le \codim \big(\U_{\rho_2 \rho_1 \pi_1,\rho_2 \pi_2}\big).
		\end{align}
	\end{prp}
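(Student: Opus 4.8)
The plan is to bound $\dim(\U_{\rho_2\rho_1\pi_1,\rho_2\pi_2})$ from above and then invoke $\codim = m-\dim$. Write $A=\rho_2\rho_1\pi_1$ and $B=\rho_2\pi_2$, let $S$ be the coordinate set preserved by $\rho_2$ (so that $r:=\rank(\rho_2)=|S|$ and $\rho_2$ is a projection onto $\im(\tau_2)=\im(\rho_2)$), and note that both $A$ and $B$ have their nonzero rows supported on $S$. By definition, every $w\in\U_{\rho_2\rho_1\pi_1,\rho_2\pi_2}$ satisfies $Aw=\lambda Bw$ for a unique scalar $\lambda\notin\{0,1\}$, uniqueness and $\lambda\neq 0,1$ coming precisely from having removed $\ker(B),\ker(A),\ker(A-B)$ in $\Z_{\rho_2\rho_1\pi_1,\rho_2\pi_2}$. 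This yields a morphism $\U\to\Ce\setminus\{0,1\}$, $w\mapsto\lambda$, whose fibers lie inside the linear spaces $\ker(A-\lambda B)$. First I would apply the theorem on dimension of fibers: each irreducible component of $\U$ either dominates $\Ce\setminus\{0,1\}$, contributing dimension at most $1+\dim\ker(A-\lambda B)$ for generic $\lambda$, or maps to a single point $\zeta$, contributing at most $\dim\ker(A-\zeta B)$. Hence $\dim(\U)\le\max\{1+d_{\mathrm{gen}},\,d_{\mathrm{sp}}\}$, where $d_{\mathrm{gen}}$ is the generic and $d_{\mathrm{sp}}$ the maximal special value of $\dim\ker(A-\lambda B)$.

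The heart of the argument is an exact computation of $\dim\ker(A-\lambda B)$. Writing the permutations as $(\pi_t w)_i=w_{p_t(i)}$ and reparametrizing the rows indexed by $S$ through $k=p_2(i)$, the condition $Aw=\lambda Bw$ becomes $w_{\psi(k)}=\lambda w_k$ for $k\in K$ and $w_k=0$ for $k\in K'$, where $\psi:=p_1 p_2^{-1}$, the set $K:=p_2(S\cap S_1)$ collects the \emph{link} coordinates, $K':=p_2(S\setminus S_1)$ the \emph{grounding} coordinates, $S_1$ is the support of $\rho_1$, and $|K|+|K'|=r$. I would then decompose $[m]$ into the cycles of $\psi$, noting that the constraints decouple across these cycles. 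On a cycle the links form disjoint directed paths, and the crucial combinatorial observation is that a grounding coordinate has no outgoing link (it lies in $K'$, not $K$), so it can only be the terminal vertex of a path; hence each link-path carries at most one grounding constraint. Counting free parameters per cycle then yields the clean identity $\dim\ker(A-\lambda B)=(m-r)+\#\{\text{cycles }C\text{ of }\psi:\ C\subseteq K,\ \lambda^{\ell(C)}=1\}$, the extra term accounting for the single ``wrap-around'' dependency that appears on a fully-linked cycle of length $\ell(C)$ exactly when $\lambda$ is an $\ell(C)$-th root of unity.

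From this identity the conclusion is immediate. For generic $\lambda$ no root-of-unity term survives, so $d_{\mathrm{gen}}=m-r$. For a special $\zeta\neq 1$, only fully-linked cycles of length $\ge 2$ can contribute, and these are vertex-disjoint subsets of $K$ with $|K|\le r$, so their number is at most $\floor*{r/2}$; thus $d_{\mathrm{sp}}\le (m-r)+\floor*{r/2}=m-\lceil r/2\rceil$. Combining the two cases gives $\dim(\U)\le\max\{m-r+1,\,m-\lceil r/2\rceil\}=m-\lceil r/2\rceil$ whenever $r\ge 2$, while the bound is trivial for $r\le 1$. Therefore $\codim(\U)\ge\lceil r/2\rceil\ge\floor*{r/2}$, as claimed.

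The step I expect to be the main obstacle is establishing the exact kernel-dimension identity, and in particular justifying that grounding constraints never accumulate on a single link-path; this is exactly what prevents the rank of $A-\lambda B$ from dropping below $\floor*{r/2}$ and what pins the excess dimension down to one unit per fully-linked cycle. A secondary subtlety is the bookkeeping in the fiber-dimension step: the binding case is not the generic $\lambda$ but the finitely many root-of-unity fibers, so one must stratify $\U$ by the eigenvalue map rather than bound a single kernel.
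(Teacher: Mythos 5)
Your proof is correct. A caveat on comparison: the paper never actually proves Proposition~\ref{prp:ProjectionsPermutations} within its own text --- it only announces that the bound follows ``via a combinatorial algebraic-geometric argument'' and defers all details to the companion reference --- so there is no in-paper proof to match against; your argument is, however, exactly of the advertised character, and it is complete. The three load-bearing steps all check out: (i) on $\U_{\rho_2\rho_1\pi_1,\rho_2\pi_2}$ the scalar $\lambda$ with $Aw=\lambda Bw$ (where $A=\rho_2\rho_1\pi_1$, $B=\rho_2\pi_2$) is well defined, lies outside $\{0,1\}$ precisely because $\ker(B)$, $\ker(A)$, $\ker(A-B)$ were removed, and is a regular function on $\U_{\rho_2\rho_1\pi_1,\rho_2\pi_2}$ since $Bw\neq 0$ there, so the fiber-dimension theorem applies component by component; (ii) your normal form $w_{\psi(k)}=\lambda w_k$ for $k\in K$, $w_k=0$ for $k\in K'$, with $\psi=p_1p_2^{-1}$ and $|K|+|K'|=r$, is the correct reparametrization, and the key combinatorial point --- a grounded coordinate has no outgoing link, hence sits only at the end of a link-path --- does force the constraints to be independent on every cycle that is not fully linked, while a fully linked cycle of length $\ell$ loses exactly one unit of rank precisely when $\lambda^{\ell}=1$; this gives the exact identity $\dim\ker(A-\lambda B)=(m-r)+\#\{\text{cycles }C\subseteq K:\lambda^{\ell(C)}=1\}$ for $\lambda\neq 0$; (iii) since $\lambda\neq 1$ rules out fully linked fixed points, the contributing cycles have length at least $2$ and are disjoint subsets of $K$, giving the bound $\floor*{r/2}$ on their number, and the max $\max\{m-r+1,\,m-\ceil*{r/2}\}$ collapses to $m-\ceil*{r/2}$ for $r\ge 2$, the cases $r\le 1$ being vacuous. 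As a bonus, your route actually establishes the marginally stronger statement $\codim\big(\U_{\rho_2\rho_1\pi_1,\rho_2\pi_2}\big)\ge\ceil*{\rank(\rho_2)/2}$ whenever $\rank(\rho_2)\ge 2$.
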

	
As a consequence of Theorem \ref{thm:Endomorphisms} and Proposition \ref{prp:ProjectionsPermutations} one has unique recovery in the unlabeled sensing case as long as the dimension of $\V$ does not exceed half the number of the coordinates preserved by each coordinate projection. This is precisely the result of \cite{Unnikrishnan:TIT18} which they obtained by attacking the problem directly via ingenious yet complicated combinatorial arguments. Even though our proof is not necessarily less complicated, it has the advantage of using a framework that generalizes relatively easily. For example, one may consider entry-wise sign corruptions on top of coordinate projections and permutations. In such a case, it is not hard to show that for the same condition as for unlabeled sensing one has unique recovery up to a sign. In general, even though analytically computing $\codim \big(\U_{\rho\tau_1,\tau_2} \big)$ may be challenging, performing this computation in an algebraic geometry software environment such as \texttt{Macaulay2} is in principle straightforward.  
	
	\subsubsection{Unique recovery for generic points}
	
	Often the requirement that \emph{every} $v \in \V$ is uniquely recoverable is unnecessarily strict. Instead, it may be of interest to ask whether unique recovery holds true for a generic $v \in \V$. In such a situation a less demanding technical analysis gives unique recovery under much weaker conditions:
	
	\begin{thm} \label{thm:Generic}
		Let $\T$ be a finite set of endomorphisms of $\Ce^m$. Then a generic point $v$ inside a generic subspace $\V$ of dimension $n$ is uniquely recoverable in $\V$ under $\T$ as long as 1) $\rank(\tau) \ge n+1$ for every $\tau \in \T$, and 2) no two endomorphisms in $\T$ are a scalar multiple of each other.
	\end{thm}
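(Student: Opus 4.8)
The plan is to reduce the claim to a statement about a single ordered pair of endomorphisms and then take a finite union. Fix the true point to be recovered, say $v_1\in\V$; by Definition \ref{dfn:UniqueRecovery}, $v_1$ fails to be recoverable precisely when there are $\tau_1,\tau_2\in\T$ and $v_2\in\V$ with $v_2\neq v_1$ and $\tau_1(v_1)=\tau_2(v_2)$. Since $\T$ is finite, it suffices to show that for each ordered pair $(\tau_1,\tau_2)$ the associated ``bad set'' $\{v_1\in\V:\exists\,v_2\in\V,\ v_2\neq v_1,\ \tau_1(v_1)=\tau_2(v_2)\}$ is contained in a proper subvariety of $\V$; a finite union of proper subvarieties is proper, so a generic $v\in\V$ avoids all of them. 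I would begin by recording that, because $\rank(\tau)\ge n+1>n$ for every $\tau\in\T$, Lemma \ref{lem:VarietyIntersectionGeneric} gives $\ker(\tau)\cap\V=0$ for a generic $\V$ and all $\tau\in\T$ simultaneously; hence each restriction $\tau|_{\V}$ is injective and $\dim\tau(\V)=n$. The diagonal pairs $\tau_1=\tau_2=\tau$ are then immediate: $\tau(v_1)=\tau(v_2)$ forces $v_1-v_2\in\ker(\tau)\cap\V=0$, so there are no bad points at all for such pairs.

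For an off-diagonal pair, condition (2) guarantees $\tau_1\neq c\,\tau_2$ for every $c\in\Ce$. I would observe that the bad set is contained in the linear subspace $P:=\{v_1\in\V:\tau_1(v_1)\in\tau_2(\V)\}$, since any witness $v_2$ forces $\tau_1(v_1)=\tau_2(v_2)\in\tau_2(\V)$. Because $\dim\tau_1(\V)=\dim\tau_2(\V)=n$ for generic $\V$, we have $P=\V$ if and only if $\tau_1(\V)=\tau_2(\V)$, so the whole problem collapses to proving that, for generic $\V$, $\tau_1(\V)\not\subseteq\tau_2(\V)$; this makes $P$ a proper subspace and the bad set proper.

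The crux, and the step I expect to be the main obstacle, is therefore the following lemma: if $\tau_1\neq c\,\tau_2$ and $\rank(\tau_2)\ge n+1$, then $\tau_1(\V)\not\subseteq\tau_2(\V)$ for generic $\V$. The condition $\tau_1(\V)\subseteq\tau_2(\V)$ is Zariski closed on the open set of $\Gr(n,m)$ where $\ker(\tau_2)\cap\V=0$ (it is the vanishing of the induced map $v\mapsto\tau_1(v)\bmod\tau_2(\V)$), so by irreducibility of $\Gr(n,m)$ it suffices to exhibit a single $\V$ with $\tau_1(\V)\not\subseteq\tau_2(\V)$. Since $\tau_1$ is not a scalar multiple of $\tau_2$, there is a vector $w$ with $\tau_1(w),\tau_2(w)$ linearly independent. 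If $\tau_1(w)\notin\im(\tau_2)$ then any $n$-dimensional $\V\ni w$ works. Otherwise $\tau_1(w)=\tau_2(u)$ for some $u$, with $\tau_2(u)\notin\Span(\tau_2(w))$; using $\dim\im(\tau_2)\ge n+1$ I would pick $w_2,\dots,w_n$ so that $\tau_2(w),\tau_2(w_2),\dots,\tau_2(w_n),\tau_2(u)$ are $n+1$ linearly independent vectors, and set $\V=\Span(w,w_2,\dots,w_n)$. Then $\tau_2(\V)$ is the $n$-dimensional span of the first $n$ of these, which excludes $\tau_1(w)=\tau_2(u)$, giving both $\tau_1(\V)\not\subseteq\tau_2(\V)$ and $\ker(\tau_2)\cap\V=0$ as required.

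Assembling the pieces: for a generic $\V$ the bad set of every off-diagonal pair lies in a proper subspace $P$ and every diagonal pair contributes none, so their finite union is a proper subvariety of $\V$ and a generic $v\in\V$ is uniquely recoverable; the real case then follows from Remark \ref{rem:Real2Complex}. The only genuinely delicate point is the crux lemma, where both hypotheses enter essentially: condition (2) produces $w$ with $\tau_1(w),\tau_2(w)$ independent, and the rank bound $\rank(\tau_2)\ge n+1$ provides enough room inside $\im(\tau_2)$ to separate $\tau_1(w)$ from an $n$-dimensional image $\tau_2(\V)$. Dropping either hypothesis admits degenerate configurations, such as two rank-one maps sharing a common image, for which recovery genuinely fails.
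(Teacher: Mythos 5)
Your proof follows essentially the same route as the paper's: reduce to ordered pairs $(\tau_1,\tau_2)\in\T\times\T$; kill the diagonal pairs using $\ker(\tau)\cap\V=0$, which follows from $\rank(\tau)\ge n+1$ and Lemma \ref{lem:VarietyIntersectionGeneric}; and for distinct pairs show that the failure locus inside $\V$ is a proper subset for generic $\V$, by exhibiting one witness subspace and appealing to closedness of the degenerate condition plus irreducibility of $\Gr(n,m)$. Your proper subspace $P=\{v\in\V:\tau_1(v)\in\tau_2(\V)\}$ is the same object the paper encodes (with the roles of the two maps swapped) via the rank condition $\rank([T_1V \,\, T_2V\xi])\le n$.

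There is, however, one genuine gap, located exactly at the step you call the crux. You justify the existence of $w$ with $\tau_1(w),\tau_2(w)$ linearly independent by condition (2) alone (``since $\tau_1$ is not a scalar multiple of $\tau_2$\dots''). That inference is false as stated: two rank-one maps with the same one-dimensional image but non-proportional kernels are not scalar multiples of each other, yet their values at every $w$ are linearly dependent --- this is precisely the degenerate configuration you yourself invoke in your final sentence, so your closing remark contradicts the inference used in the middle of your proof. The statement you actually need is: if $\tau_1(w),\tau_2(w)$ are dependent for \emph{every} $w$ and $\rank(\tau_2)\ge 2$, then $\tau_1=c\,\tau_2$ for some $c\in\Ce$. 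This is true and suffices here (since $\rank(\tau_2)\ge n+1\ge 2$), but it requires an argument you do not give: for $w\notin\ker(\tau_2)$, dependence forces $\tau_1(w)=c(w)\tau_2(w)$ with $c(w)$ unique; evaluating at $w_1+w_2$ shows $c(w_1)=c(w_2)$ whenever $\tau_2(w_1),\tau_2(w_2)$ are independent, and $\rank(\tau_2)\ge 2$ then makes $c$ constant off $\ker(\tau_2)$; linearity extends $\tau_1=c\,\tau_2$ across $\ker(\tau_2)$. A second, smaller slip: in your first case (``$\tau_1(w)\notin\im(\tau_2)$, any $\V\ni w$ works''), the witness $\V$ must also satisfy $\ker(\tau_2)\cap\V=0$, since you established closedness of the containment condition only on that open locus; this is easy to arrange (extend $\tau_2(w)$ to $n$ independent vectors of $\im(\tau_2)$, using $\rank(\tau_2)\ge n+1$, and take preimages), but it does not hold for an arbitrary $\V$ containing $w$.
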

	\vspace{-0.2cm}
	\begin{proof}
		\emph{(Sketch)} Let $V \in \Ce^{m \times n}$ be a basis of $\V$. If $\tau_1(v_1) = \tau_2(v_2)$ then $\tau_2(v_2) \in \tau_1(\V)$ and so $\rank ([T_1V \, \, \, T_2 V \xi]) \le n$ for $\xi \in \Ce^n$ with $v_2=V\xi$. The proof then proceeds by exhibiting, for $\tau_1 \neq \tau_2$, a $\V$ and a $\xi$ for which $\rank ([T_1V \, \, \, T_2 V \xi]) = n+1$. This implies that 
for generic $\V$ and $v_2 \in \V$, $\tau_1(v_1) = \tau_2(v_2)$ for $v_1 \in \V$ only when $\tau_1=\tau_2$. In that case $v_1-v_2 \in \ker(\tau_1)$, and Lemma \ref{lem:VarietyIntersectionGeneric} implies that  $v_1 - v_2=0$.			
	\end{proof}
	\vspace{-0.15cm}
	A consequence of Theorem \ref{thm:Generic} is the unique recovery of a generic vector in the unlabeled sensing case as soon as the coordinate projections preserve at least $n+1$ entries:
	\begin{cor} \label{cor:GenericUnlabeldSensing}
		Let $\T$ be the set of endomorphisms of $\Ce^m$ such that every $\tau \in \T$ has the form $\tau = \rho \pi$, where $\pi$ is a permutation and $\rho$ a coordinate projection. Then for a generic subspace $\V$ of dimension $n$, and a generic $v \in \V$, we have unique recovery of $v$ in $\V$ under $\T$, as long as $\rank(\rho) \ge n+1$ for every $\rho \pi \in \T$.
	\end{cor}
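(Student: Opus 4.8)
The plan is to obtain this as a direct specialization of Theorem \ref{thm:Generic}: the entire content of the corollary is that coordinate-projected permutations automatically satisfy the two abstract hypotheses of that theorem, so once those are verified the conclusion is immediate. Thus my proof would be a translation of the abstract conditions into concrete statements about the $0$-$1$ matrices representing the endomorphisms $\rho\pi$.

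First I would dispatch the rank hypothesis. Each $\tau = \rho\pi \in \T$ has $\pi$ a permutation, hence an automorphism, and composing with an automorphism does not change rank, so $\rank(\tau) = \rank(\rho\pi) = \rank(\rho)$. The standing assumption $\rank(\rho) \ge n+1$ then yields $\rank(\tau) \ge n+1$ for every $\tau \in \T$, which is exactly condition 1 of Theorem \ref{thm:Generic}.

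Next I would verify that no two distinct elements of $\T$ are scalar multiples of one another (condition 2). The key observation is that every $\rho\pi$ is represented by a matrix with entries in $\{0,1\}$: the coordinate projection $\rho$ is a diagonal $0$-$1$ matrix and $\pi$ a permutation matrix, so each row of $\rho\pi$ is either a row of $\pi$ (when the corresponding coordinate is preserved) or the zero row. Suppose toward a contradiction that $\rho_1\pi_1 = \lambda\, \rho_2\pi_2$ for distinct $\rho_1\pi_1,\rho_2\pi_2 \in \T$ and some $\lambda \in \Ce$. If $\lambda = 0$ then $\rho_1\pi_1 = 0$, forcing $\rank(\rho_1)=0$ and contradicting $\rank(\rho_1) \ge n+1 \ge 1$. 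If $\lambda \neq 0$, the two matrices have the same support, and on that common support the left-hand entries equal $1$ while the right-hand entries equal $\lambda$; hence $\lambda = 1$ and $\rho_1\pi_1 = \rho_2\pi_2$, contradicting distinctness.

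With both hypotheses in hand, the statement follows by invoking Theorem \ref{thm:Generic} directly. I expect the only place needing any care to be condition 2, and even there the argument is short once one exploits the $0$-$1$ structure of $\rho\pi$; there is no genuine obstacle, so the corollary is truly an immediate consequence of the general generic-recovery theorem.
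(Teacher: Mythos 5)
Your proposal is correct and matches the paper's approach: the corollary is stated there as a direct consequence of Theorem \ref{thm:Generic}, with the two hypotheses (rank at least $n+1$, since $\rank(\rho\pi)=\rank(\rho)$, and no two elements of $\T$ being scalar multiples, via the $\{0,1\}$ entry structure) verified exactly as you do. Your write-up simply makes explicit the verification the paper leaves implicit.
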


\section{Algorithms and Application}
	
\subsection{Branch-and-bound for unlabeled sensing} \label{subsection:BnB}

	In this section we propose a globally optimal method for the unlabeled sensing problem, by minimizing \eqref{eq:US} via a dynamic-programming based branch-and-bound scheme. Both ingredients are standard, and we just describe how to combine them; see \cite{Emiya:ICASSP2014,Yang-PAMI16} for transparent discussions of branch-and-bound in related contexts. Set $f(x,S)=\left\|y -SAx\right\|_2$, with $x \in \Re^n$ and $S$ a selection matrix. As branching over the space of selection/permutation matrices $S$ is known to be inefficient \cite{Li:ICCV07}, the crucial aspect of our approach is to branch only over the space of $x$, while relying on a local computation of the optimal $S$, say $S_x$, given $x$. Here is where dynamic programming comes into
	play:
	\cite{Haghighatshoar:TSP18} showed that if there exists an
	order-preserving $S_x$ such that
	$f(x,S_x)=\min_{S}\left\|y -SAx\right\|_2$, then $S_x$ can be
	computed via dynamic programming\footnote{That such an assignment problem can be solved via
		dynamic programming was already known by \cite{Aggarwal-FoCS1992}.} at a complexity $\mathcal{O}(mk)$. At
	first sight this does not generalize to any $y,A,x$ as
	none of the minimizers over $S$ is expected to be order-preserving.
	However, if we order $y,Ax$ in descending
	order to obtain say $y^{\downarrow},(Ax)^{\downarrow}$,
	then 1) there is an order-preserving selection matrix $S_x'$ such that
	$\left\|y^{\downarrow}
	-S_x'(Ax)^{\downarrow}\right\|_2=\min_{S}\left\|y^{\downarrow}
	-S(Ax)^{\downarrow}\right\|_2$, and 2) $S_x$ can be easily
	obtained from $S_x'$. In conclusion, given $x$ we can compute $S_x$ in $\mathcal{O}(mk)$; this is in sharp contrast to other linear assignment algorithms such as the Hungarian algorithm, most of which have complexity $\mathcal{O}(m^3)$ \cite{Burkard:2009}. Finally, our strategy becomes that of computing an upper bound of $f$ in a hypercube with center $x_0$ via
	alternating minimization between $x$ and $S$, initialized at $x_0$. Computing a tight lower bound $\ell$ of $f$ for a given hypercube is challenging
	and our choice here is a crude one: $\ell=\left\|y -S_{x_0}Ax_0\right\|_2 - \sigma_1(A)
	\epsilon$, where $\epsilon$ is half the hypercube diagonal and $\sigma_1(A)$ is the largest singular value of $A$. We refer to this as \texttt{Algorithm-A}.
	
\subsection{A robust version of \cite{Elhami:ICASSP17}} \label{subsection:RobustElhami}
It turns out that the dynamic programming \emph{trick} of \S \ref{subsection:BnB} is also the key to a robust version of the theoretical algorithm of \cite{Elhami:ICASSP17}: we randomly select a sub-vector $\bar{y}$ of $y$ of length $n$, and for each $A_i$ out of the $m!/(m-n)!$ many $n \times n$ matrices that can be made by concatenating different rows of $A$ in any order we let $x_i=A_i^{\dagger} \bar{y}$. We then use dynamic programming to select the $x_i$ with the lowest assignment error $ \min_{S}\left\|y -S A x_i\right\|_2$. This is an algorithm of complexity $\mathcal{O}(km^{n+1})$, we call it \texttt{Algorithm-B}.
	
\subsection{Evaluation on synthetic data}
We compare the proposed 1) \texttt{Algorithm-A} of \S \ref{subsection:BnB} and 2) \texttt{Algorithm-B} of \S \ref{subsection:RobustElhami} to other state-of-the-art methods (\S  \ref{subsection:PriorArt}), using normally distributed $A,x,\varepsilon$ with $n=3,m=100$ and $\sigma=0.01$ for the noise. For shuffled linear regression ($k=m$) we compare with 3) \texttt{Slwasky19} that solves \eqref{eq:RR}, 4) \texttt{Tsakiris18} which is the algebraic-geometric method of \cite{Tsakiris:SLR-arXiv18} and 3) \texttt{Abid18} which performs alternating minimization on \eqref{eq:MLE} via least-squares and sorting \cite{Abid:arXiv18}\footnote{The \emph{soft-EM} algorithm of \cite{Abid:arXiv18} consistently fails in our experiment, thus we do not include it in the figure.}. For unlabeled sensing ($k\le m$) we compare with 4) \texttt{Haghighatshoar18} \cite{Haghighatshoar:TSP18}. As seen from Fig. \ref{fig:SLR-US} the proposed methods perform uniformly better and often by a large margin than the other methods, when tested in their robustness against the percentage of shuffled data, outlier ratio and noise level. In particular, we see that for the unlabeled sensing problem of Figs. \ref{figure:US_Shuffles_n3}-\ref{figure:US_sigma_n3} \texttt{Algorithm-A} and \texttt{Algorithm-B} are the only working solutions. Encouraging as these results may be, we do note that an important weakness of these methods is their scalability: for \texttt{Algorithm-B} this is more of an inherent issue due to its brute-force nature, while for \texttt{Algorithm-A} it is due to its naive lower bounding scheme: the consequence of it being far from tight manifests itself at higher dimensions ($n\ge 4$) or large outlier ratios ($k \ll m$), in which the method becomes too slow: it runs\footnote{In this experiment \texttt{Algorithm-A} stops splitting a hypercube when a \emph{depth} $6$ for that hypercube has been reached. Run on an Intel(R) i7-8650U, 1.9GHz, 16GB machine.} in $1$sec for $k=m=100$, $30$sec for $k=80$ and $5$min for $k=50$. In contrast, \texttt{Algorithm-B} is immune to $k$ and runs in about $40$sec. 
		
	\setcounter{figure}{0}
	\begin{figure}[t!]
		\centering
		\begin{minipage}[]{\linewidth}
			\begin{subfigure}[t]{0.49\textwidth}
				\includegraphics[width=\linewidth]{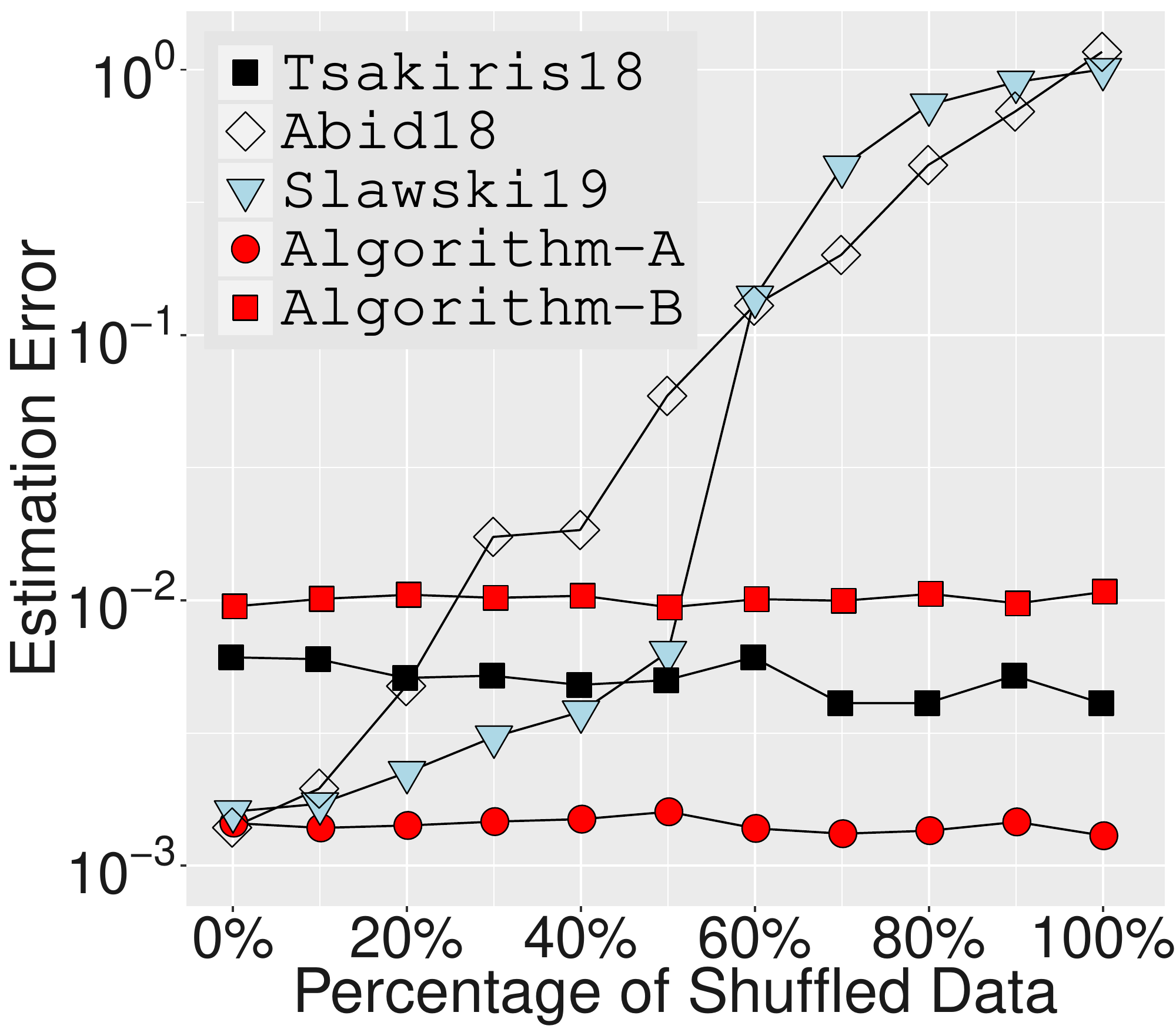}
				\vspace{-0.6cm}
				\caption{shuffled linear regression}\label{figure:SLR_Shuffles_n3}
			\end{subfigure}
			\begin{subfigure}[t]{0.49\textwidth}
				\includegraphics[width=\linewidth]{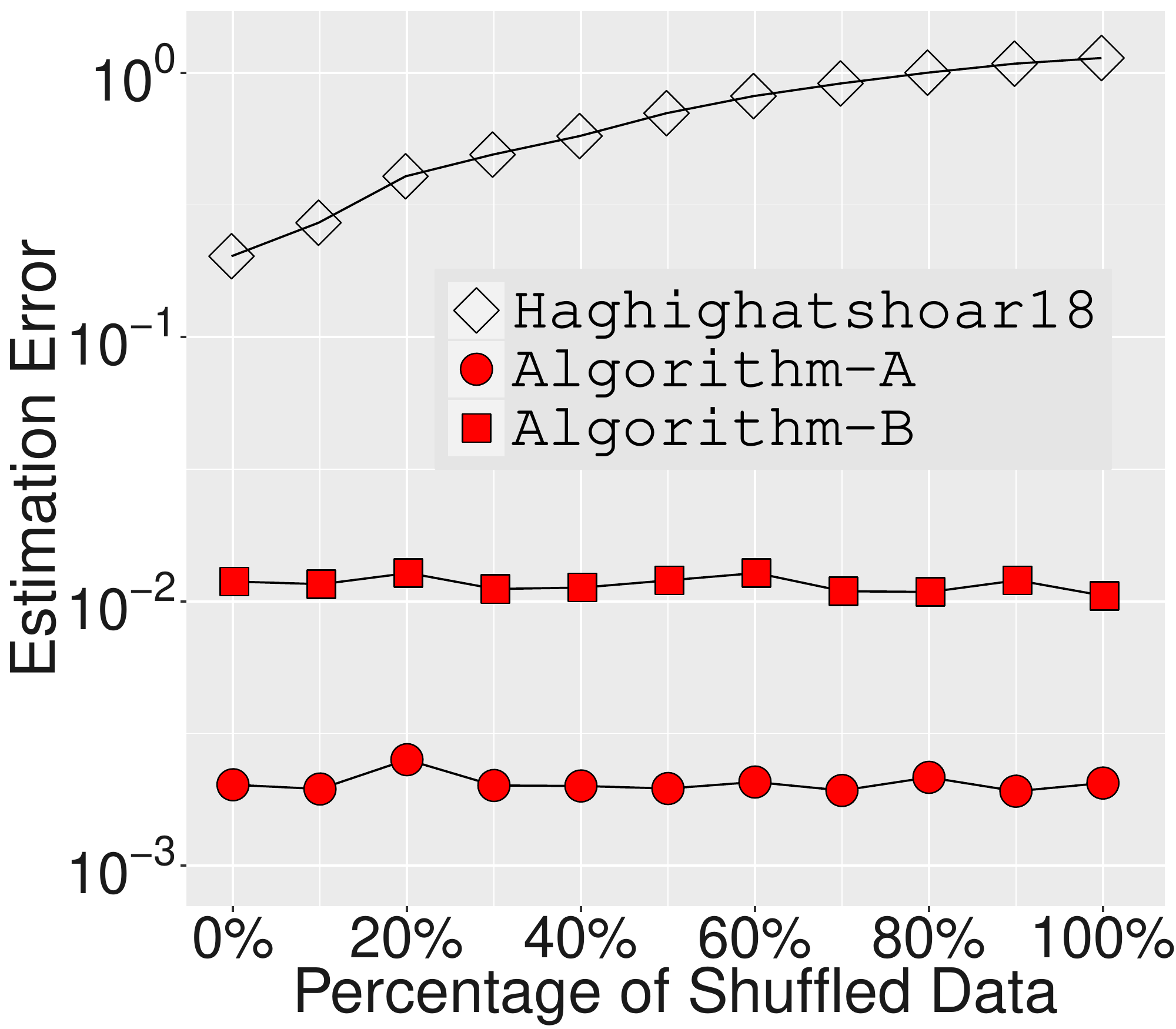}		
				\vspace{-0.6cm}	
				\caption{unlabeled sensing}\label{figure:US_Shuffles_n3}
			\end{subfigure}
		\end{minipage}
		\begin{minipage}[]{\linewidth}
			\begin{subfigure}[t]{0.49\textwidth}
				\includegraphics[width=\linewidth]{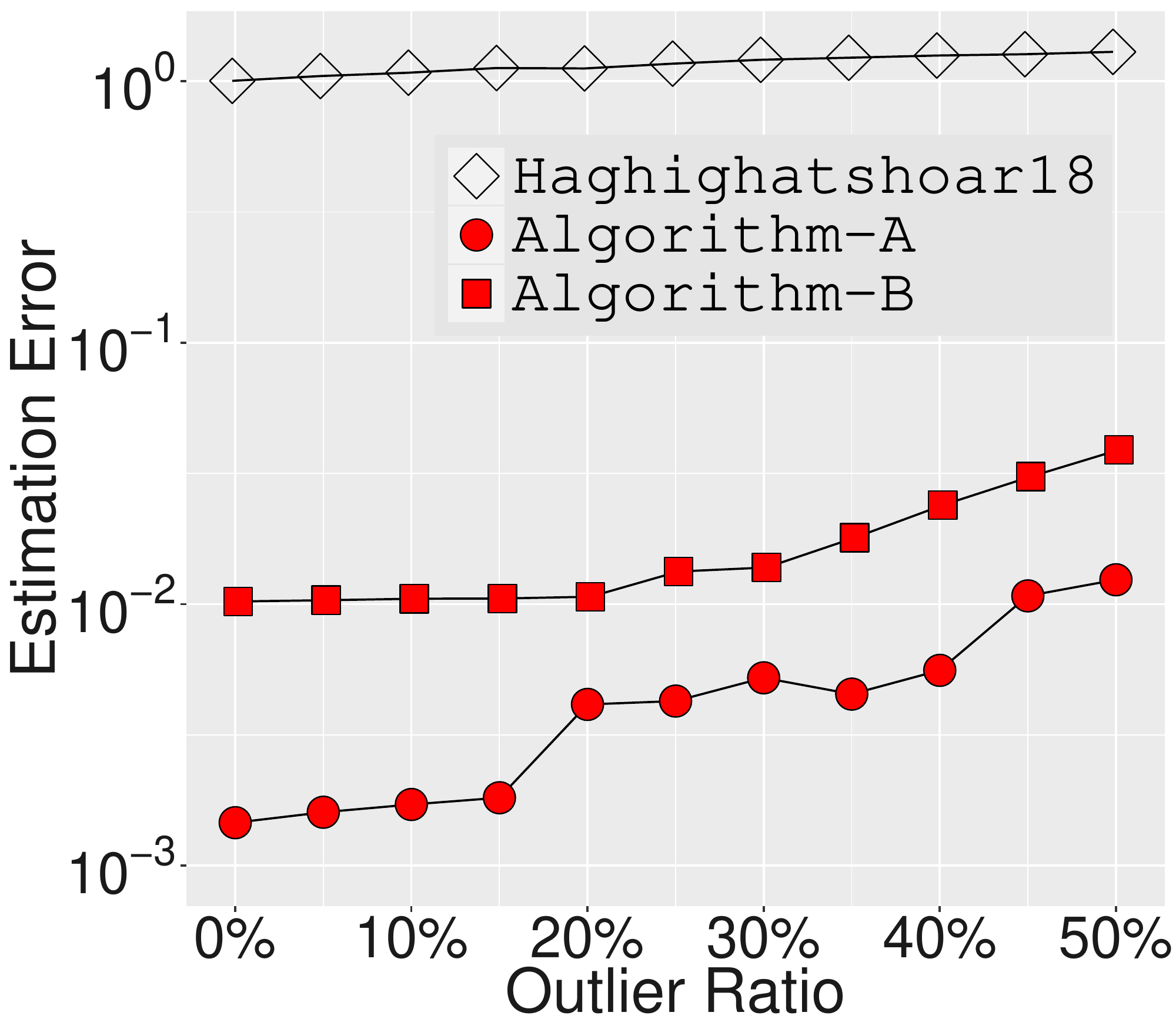}
				\vspace{-0.6cm}	
				\caption{unlabeled sensing}\label{figure:US_k_n3}
			\end{subfigure}
			\begin{subfigure}[t]{0.49\textwidth}
				\includegraphics[width=\linewidth]{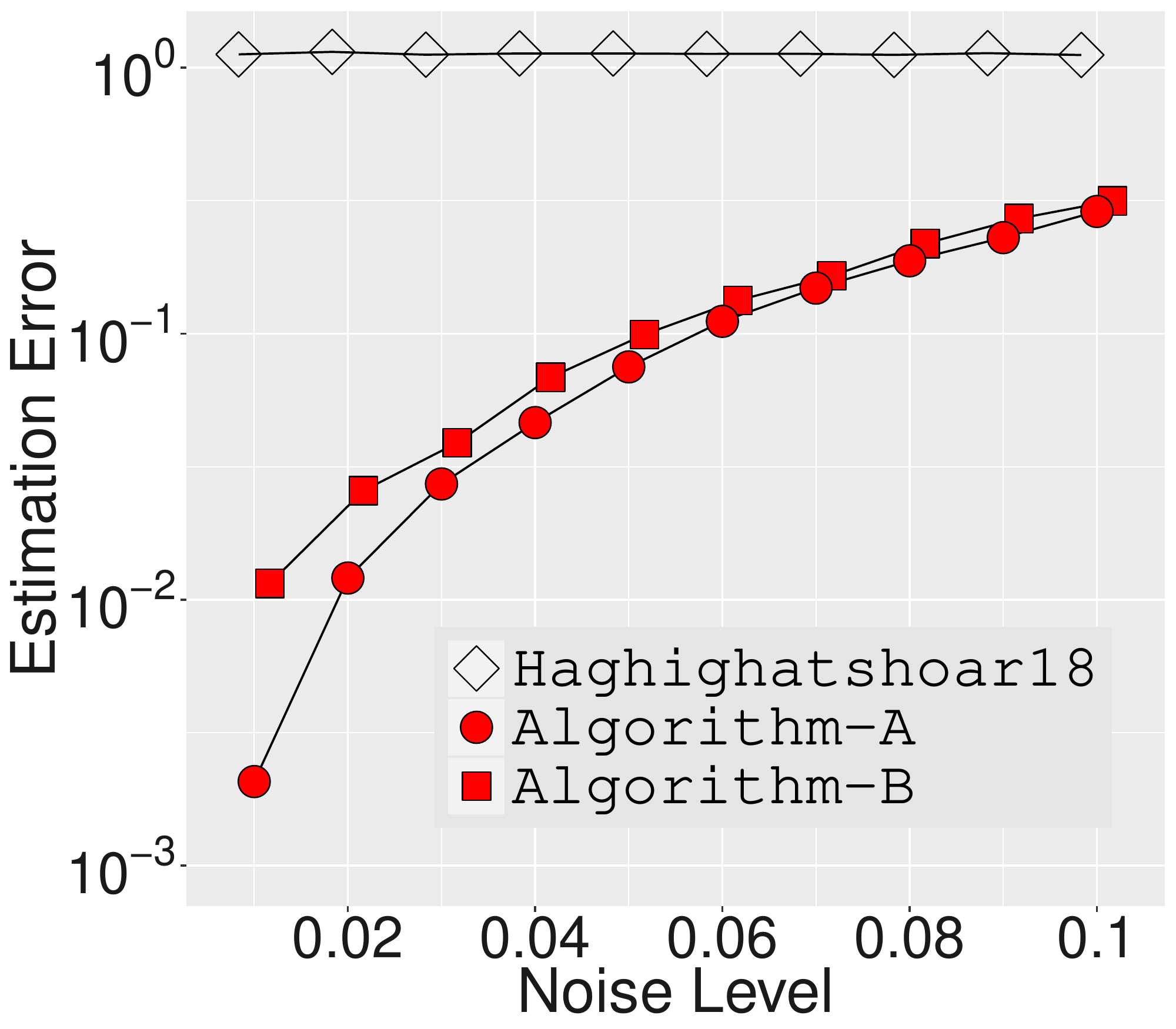}
				\vspace{-0.6cm}	
				\caption{unlabeled sensing}\label{figure:US_sigma_n3}
			\end{subfigure}
		\end{minipage}
		\vspace{-0.3cm}
		\caption{Relative error vs. $\%$ of shuffled data, outlier ratio ($1-k/m$) and noise level ($\sigma$).  $n=3, m=100,k=80,\sigma=0.01$, $1000$ trials. Proposed algorithms in red.}
		\label{fig:SLR-US}
	\end{figure}

	\subsection{Application to image registration}
	\begin{figure}[t]
		\centering
		\begin{minipage}[]{\linewidth}
			\begin{subfigure}[t]{0.49\textwidth}
				\includegraphics[width=\linewidth]{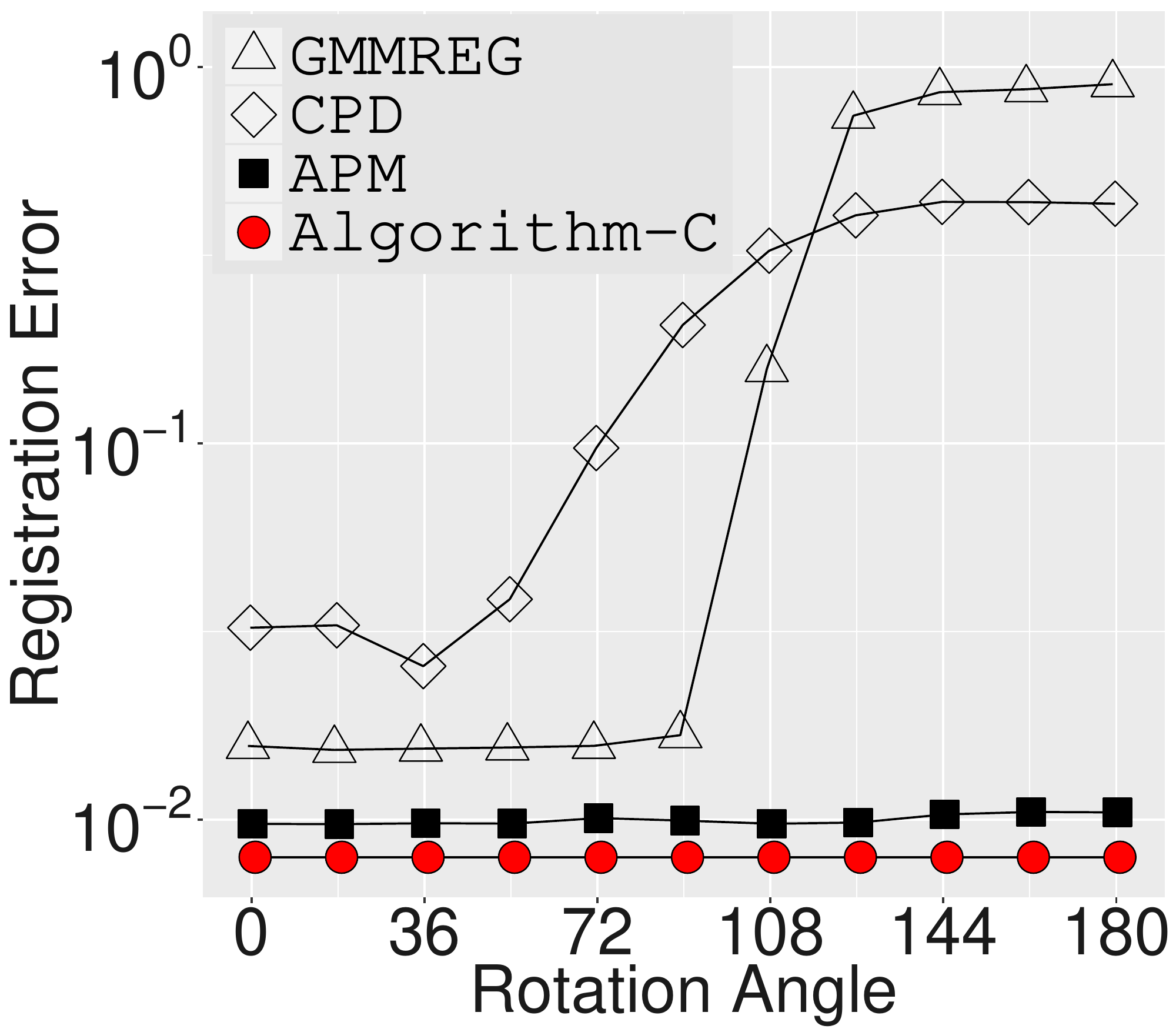}
				\vspace{-0.6cm}
				\caption{2D rotation}\label{figure:Rotation_fu}
			\end{subfigure}
			\begin{subfigure}[t]{0.49\textwidth}
				\includegraphics[width=\linewidth]{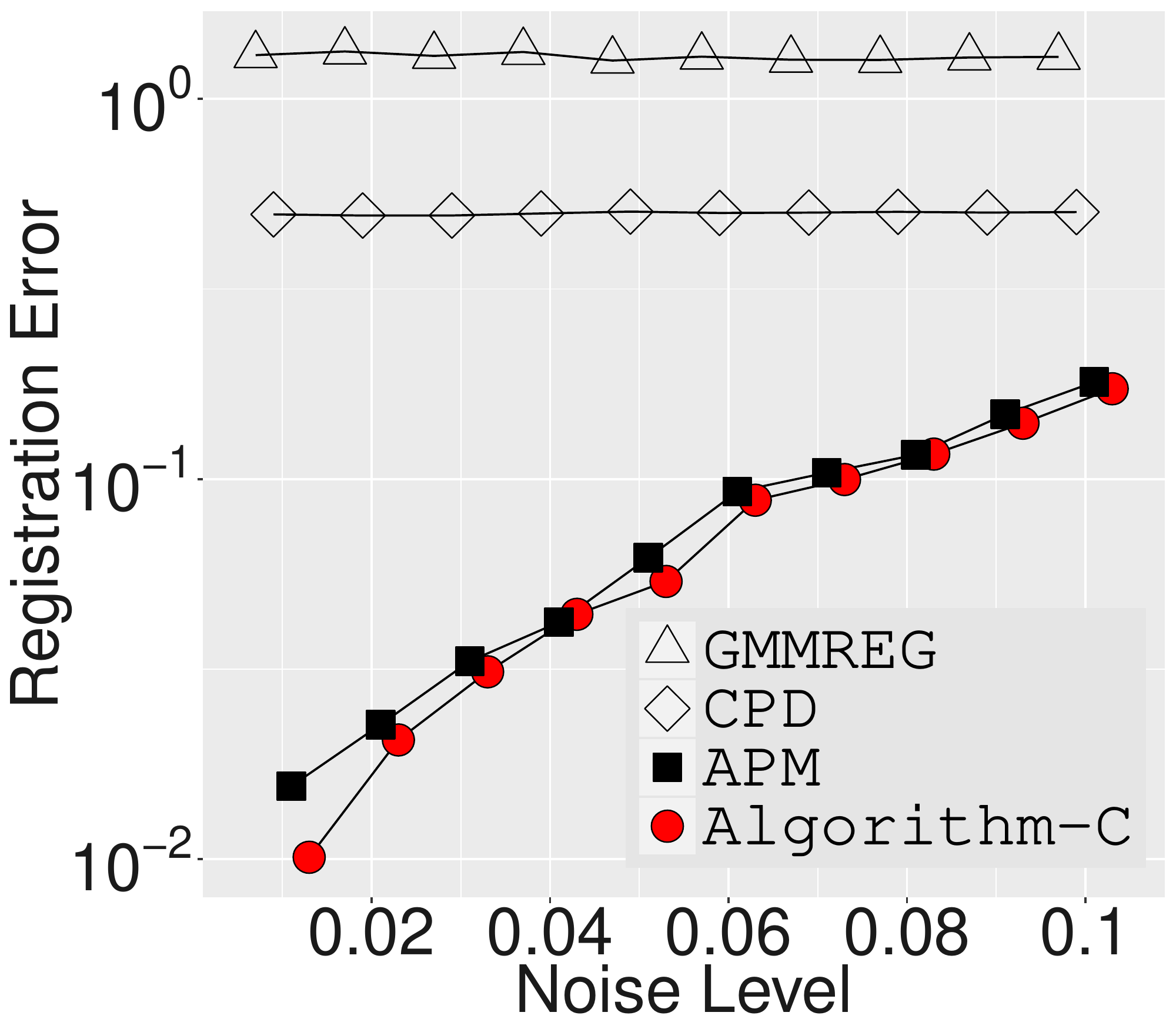}
				\vspace{-0.6cm}
				\caption{affine}\label{figure:sigma_fu}
			\end{subfigure}
		\end{minipage}
		\begin{minipage}{\linewidth}
			\begin{subfigure}[t]{0.49\textwidth}
				\includegraphics[width=\linewidth]{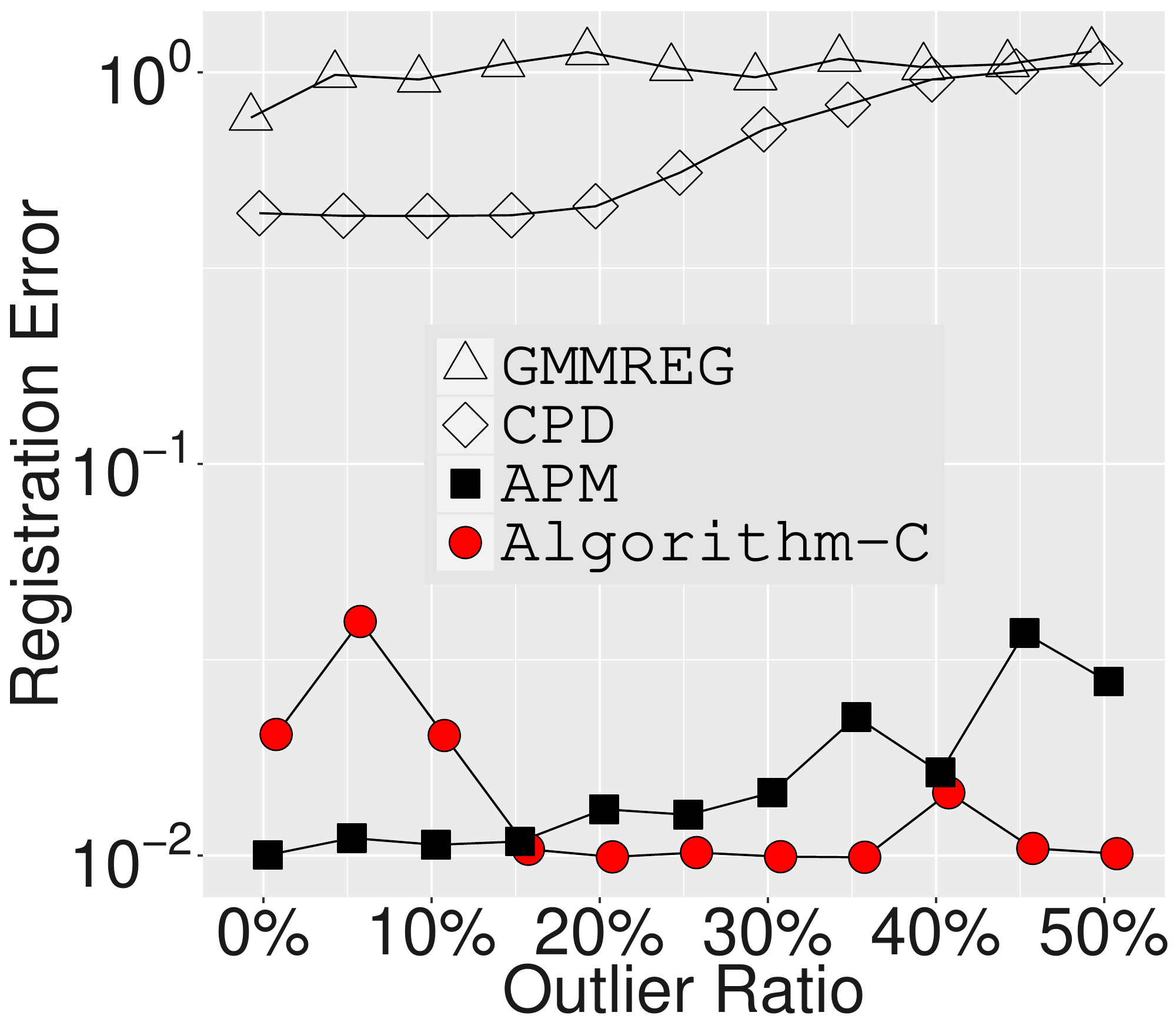}
				\vspace{-0.6cm}
				\caption{affine}\label{figure:k_fu}
			\end{subfigure}
			\begin{subfigure}[t]{0.49\textwidth}
				\includegraphics[width=\linewidth]{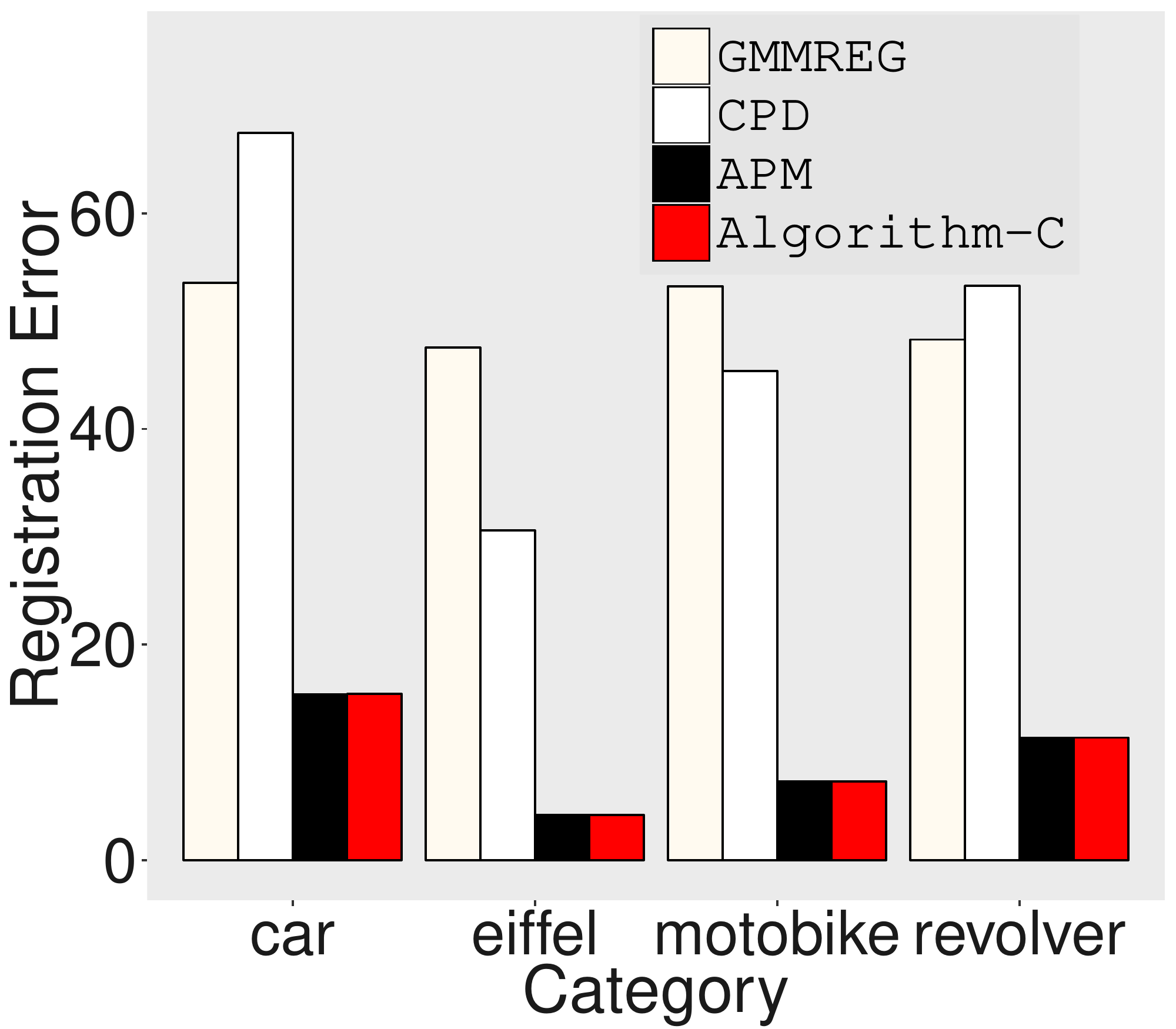}
				\vspace{-0.6cm}
				\caption{real images}\label{figure:CalTechVoc}
			\end{subfigure}
		\end{minipage}
		\begin{minipage}{\linewidth}
			\begin{subfigure}[t]{0.49\textwidth}
				\includegraphics[width=\linewidth]{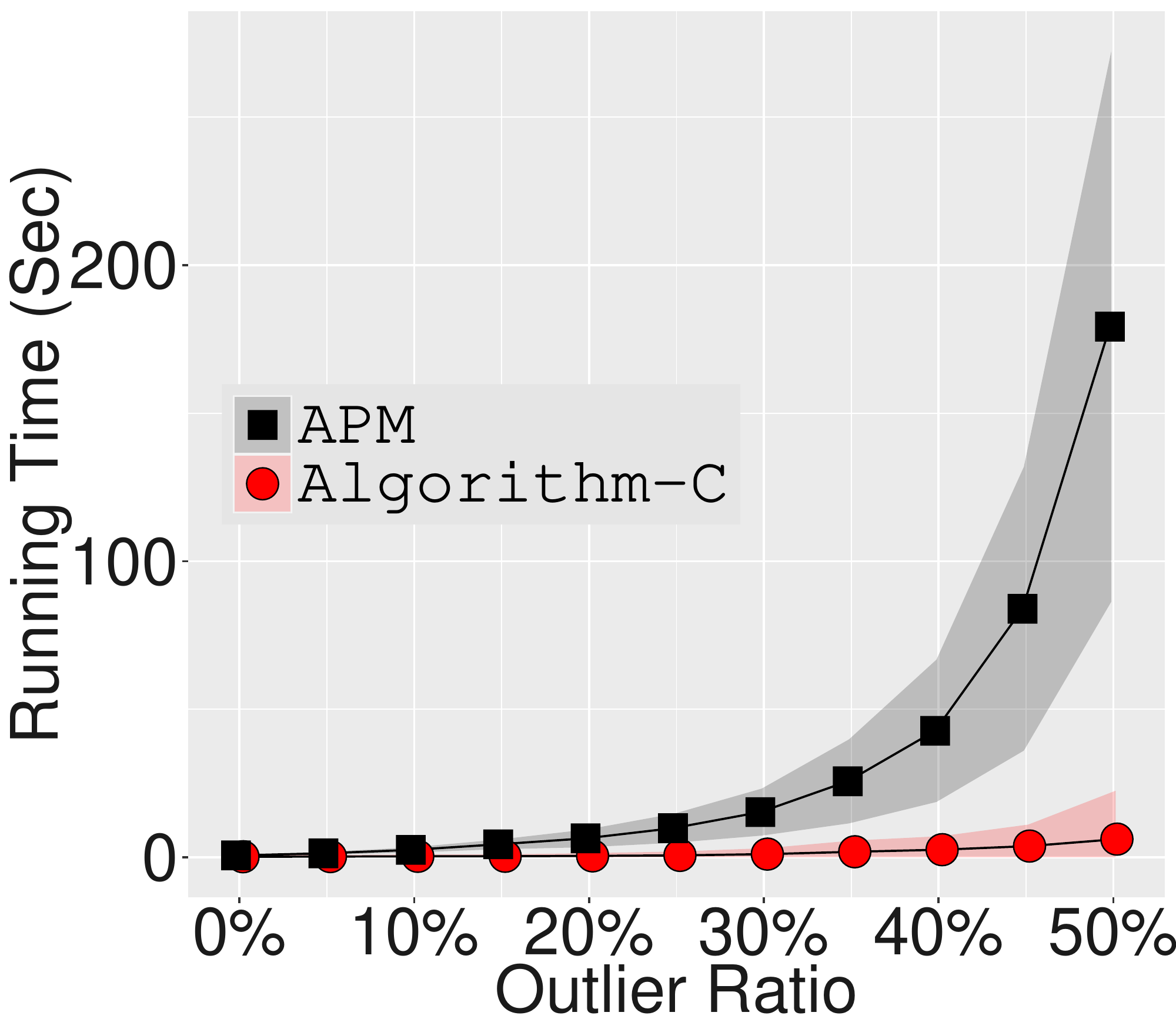}
				\vspace{-0.6cm}
				\caption{affine}\label{figure:RunningTime}
			\end{subfigure}
			\begin{subfigure}[t]{0.49\textwidth}
				\includegraphics[width=\linewidth]{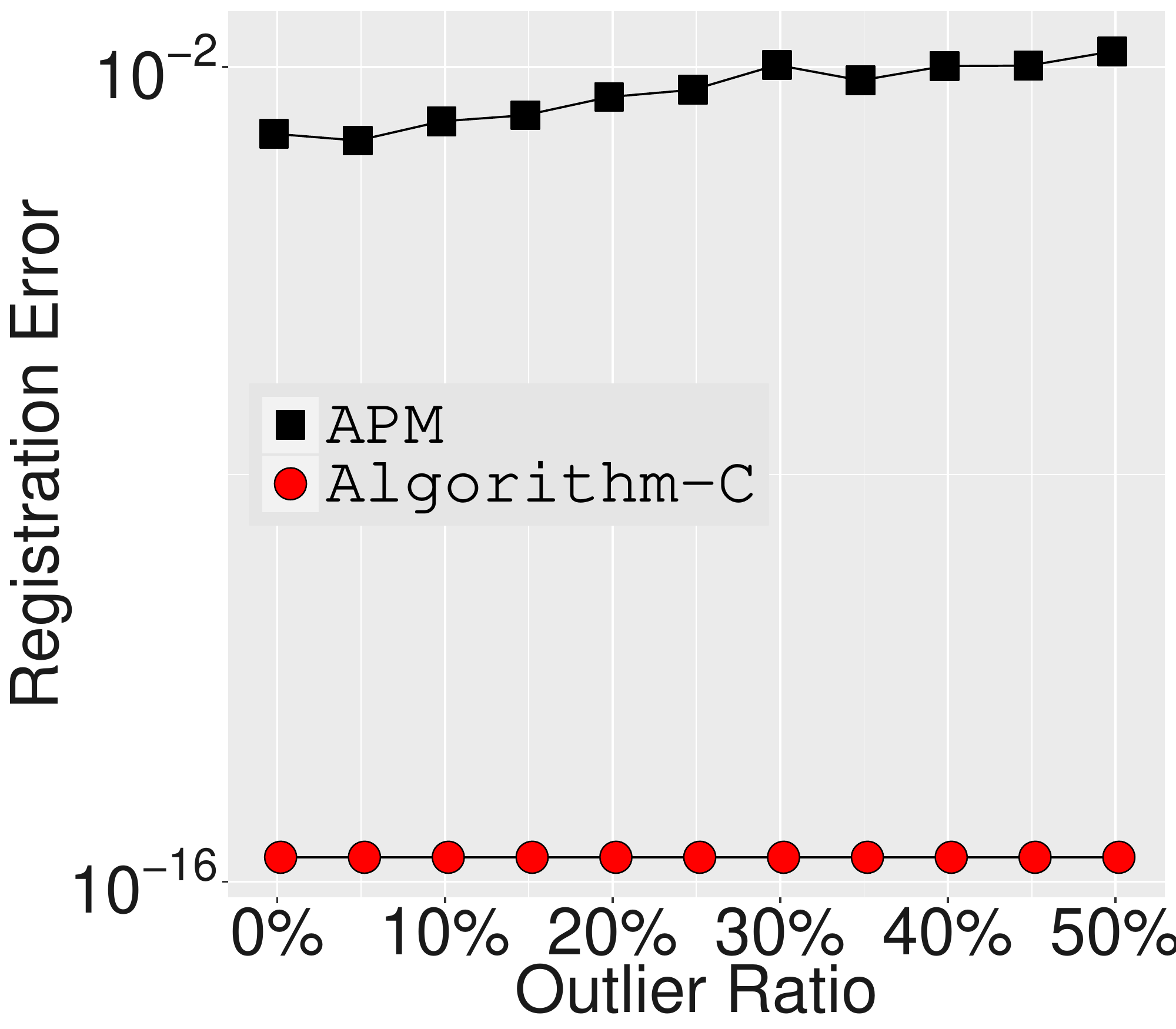}
				\vspace{-0.6cm}
				\caption{affine}\label{figure:RunningTime_Errors}
			\end{subfigure}
		\end{minipage}
		\vspace{-0.3cm}
		\caption{Image registration using synthetic benchmark dataset \emph{The Chinese Character} \cite{Chui-CVIU03} (\ref{figure:Rotation_fu}-\ref{figure:k_fu},\ref{figure:RunningTime}-\ref{figure:RunningTime_Errors},$100$ trials) and the collection of real images used in \cite{Lian:PAMI17} (\ref{figure:CalTechVoc}). Proposed method in red.} \label{fig:Image-Registration}  
	\end{figure}
	Registering point sets $P,Q$ between two images is a classical problem in computer vision. Assuming that $P,Q$ are related by an affine transformation $T$ and that each point in $P$ (\emph{model set}) has a counterpart in $Q$ (\emph{scene set}) \cite{Lian:PAMI17}, jointly searching for the affine transformation and the registration can be done by minimizing the function $F(T,S)=\left\|PT-SQ\right\|_\text{F}$, where $Q \in \Re^{m \times 2}, P \in \Re^{k \times 3}, T \in \Re^{3 \times 2}$, with homogeneous coordinates used for $P$, and $S$ is a selection matrix. This is a matrix version of the unlabeled sensing objective function \eqref{eq:US}. Our contribution here is to adjust algorithm of \S \ref{subsection:BnB} to solve the image registration problem. This involves branching over a $6$-dimensional space to compute $T$, i.e., $n=6$. This does not contradict the remark of the previous section regarding scalability: the key here is that each point correspondence imposes two constraints on $T$ (as opposed to one constraint in the general case), so that, loosely speaking, the \emph{effective} outlier ratio is $1-2k/m$ (as opposed to $1-k/m$). As we will soon see, this has a crucial effect on performance. Finally, as dynamic programming is not applicable to obtain $S$ given $T$, we employ a standard linear assignment algorithm of complexity $\mathcal{O}(m^3)$ \cite{Jonker:C1987}. We refer to this algorithm as \texttt{Algorithm-C}.

We compare 1) \texttt{Algorithm-C} with state-of-the-art image registration techniques, i.e., 2) \texttt{CPD} \cite{Myronenko:PAMI2010}, 3) \texttt{GMMREG} \cite{Jian:PAMI2011}, and 4) \texttt{APM} \cite{Lian:PAMI17}, using a subset of the benchmark datasets used by \cite{Lian:PAMI17}. Since \texttt{APM} is the most competitive among these last three, we let it run to convergence with a tolerance parameter of $0.1$ and set its running time as a time budget for our method; \texttt{CPD} and \texttt{GMMREG} are local methods and they run very fast. When the affine transformation is a rotation (Fig. \ref{figure:Rotation_fu}) \texttt{CPD}, \texttt{GMMREG} only work for small angles, while they fail for general affine transformations (Figs. \ref{figure:sigma_fu}-\ref{figure:k_fu}). On the other hand, \texttt{Algorithm-C} performs comparably to APM with the following twist in Fig. \ref{figure:k_fu}: when the outlier ratio is small, \texttt{APM} converges very quickly resulting to an inadequate time budget for our method. Conversely, when the outlier ratio is large, \texttt{APM}'s accuracy becomes inadequate while it is slow enough for our method to perform even better than for fewer outliers. These running times for \texttt{APM} are shown in Fig. \ref{figure:RunningTime} where the same experiment is run for noiseless data: \texttt{APM} still uses a tolerance of $0.1$ while to make a point we set the tolerance of our method to zero and let it terminate. As seen in Figs. \ref{figure:RunningTime}-\ref{figure:RunningTime_Errors}, our method terminates significantly faster than \texttt{APM} for large outlier ratios, suggesting that its branch-and-bound structure may have an advantage over that of \texttt{APM}. 

\section*{Acknowledgement}
\noindent The first author is grateful to Dr. Aldo Conca for first noting and sharing the phenomenon of Lemmas \ref{lem:SingleTauCharacterization} and \ref{lem:DiagonalizableAutomorphismSingleTau}. We also thank Dr. Laurent Kneip for 
suggesting the case study of image registration under general affine transformations.

	\bibliographystyle{IEEEtran}
	\bibliography{HS-arXiv28Apr19}

\end{document}